\def\RR{{\mathbb R}}
\pgfplotsset{compat=1.14}
\newcommand\cS{\mathcal{S}}
\newcommand\expectation{\mathbb E}
\DeclareMathOperator\cov{cov}
\newcommand\normalDistribution{\mathcal{N}}
\newcommand\foldedNormalDistribution{\mathcal{FN}}
\newcommand\bigOh{\mathcal{O}}
\newcommand\fec{{\rm fec}\xspace} 
\newcommand\ttd{{\rm ttd}\xspace} 
\newcommand\dev{{\rm dev}\xspace} 
\newcommand\subdivision{\cS}
\newcommand\dualgraph{\Gamma}
\newcommand\significantsubgraph{\dualgraph_{\rm sig}}
\newcommand\significantclusterpartition{\subdivision_{\rm sig}}
\newcommand\SetOf[2]{\left\{#1\,\vphantom{#2}\right|\left.\vphantom{#1}\,#2\right\}}
\DeclareMathOperator{\conv}{conv}
\DeclareMathOperator{\nvol}{nvol}
\DeclareMathOperator{\LP}{LP}
\theoremstyle{plain}
\newtheorem{theorem}{Theorem}
\newtheorem{proposition}[theorem]{Proposition}
\newtheorem{lemma}[theorem]{Lemma}
\theoremstyle{definition} 
\newtheorem{example}[theorem]{Example}
\newtheorem{remark}[theorem]{Remark} 
\newcommand\polymake{{\tt polymake}\xspace}
\renewcommand{\todo}[2][]{\tikzexternaldisable\@todo[#1]{#2}\tikzexternalenable}
\title[Cluster partitions and fitness landscapes]{Cluster partitions and fitness landscapes of the Drosophila fly microbiome}
\author[H.\,Eble]{Holger Eble}
\address{\textsuperscript{1}
  Institut f\"ur Mathematik, MA 6-2, TU Berlin, 10623 Berlin, Germany}
\email{eble@math.tu-berlin.de}
\author[M.\,Joswig]{Michael Joswig}
\address{\textsuperscript{1}
    Institut f\"ur Mathematik, MA 6-2, TU Berlin, 10623 Berlin, Germany}
\email{joswig@math.tu-berlin.de}
\thanks{Research by M. Joswig is partially supported by Einstein Stiftung Berlin and Deutsche Forschungsgemeinschaft (EXC 2046: \enquote{MATH$^+$}, SFB-TRR 109: \enquote{Discretization in Geometry and Dynamics}, SFB-TRR 195: \enquote{Symbolic Tools in Mathematics and their Application}, and GRK 2434: \enquote{Facets of Complexity}).}
\author[L.\,Lamberti]{Lisa Lamberti}
\address{\textsuperscript{2} Department of Biosystems Science and Engineering, ETH Z\"urich, Basel, Switzerland;
SIB Swiss Institute of Bioinformatics, Basel, Switzerland}
\email{lisa.lamberti@bsse.ethz.ch}
\author[W.\,Ludington]{William B. Ludington}
\address{\textsuperscript{3}
Department of Embryology, Carnegie Institution for Science, Baltimore, USA}
\email{ludington@carnegiescience.edu}
\thanks{W.B. Ludington acknowledges support by the NIH Office of the Director's Early Independence Award grant DP5OD017851}
\begin{document}

\tikzset{
  green3.845/.style = {fill=green!20!brown!40},
  green3.075/.style = {fill=green!40!brown!30},
  green2.245/.style = {fill=green!60!brown!20},
  green0.065/.style = {fill=green!80!brown!10},    
}

\begin{abstract}
  The concept of genetic epistasis defines an interaction between two genetic loci as the degree of non-additivity in their phenotypes. A fitness landscape describes the phenotypes over many genetic loci, and the shape of this landscape can be used to predict evolutionary trajectories. Epistasis in a fitness landscape makes prediction of evolutionary trajectories more complex because the interactions between loci can produce local fitness peaks or troughs, which changes the likelihood of different paths.  While various mathematical frameworks have been proposed to calculate the shapes of fitness landscapes, Beerenwinkel, Pachter and Sturmfels \ (2007) suggested studying regular subdivisions of convex polytopes. In this sense, each locus provides one dimension, so that the genotypes form a cube with the number of dimensions equal to the number of genetic loci considered. The fitness landscape is a height function on the coordinates of the cube. Here, we propose cluster partitions and cluster filtrations of fitness landscapes as a new mathematical tool, which provides a concise combinatorial way of processing metric information from epistatic interactions. Furthermore, we extend the calculation of genetic interactions to consider interactions between microbial taxa in the gut microbiome of \emph{Drosophila} fruit flies. We demonstrate similarities with and differences 
to the previous approach. As one outcome we locate interesting epistatic information on the fitness landscape where the previous approach is less conclusive.
\end{abstract}

\maketitle

\tableofcontents




\section{Introduction}

In evolutionary biology, the concept of a fitness landscape plays a prominent role in the study of genetic mutations, evolutionary trajectories, and the consequences for organismal health and disease ~\cite{Krug}.
These landscapes were introduced by Sewall Wright in \cite{Wright1932} and typically arise as high dimensional discrete or continuous genotype--phenotype mappings.
The underlying coordinates in these mappings encode alleles at $n$ genetic loci of interest and are called \emph{genotypes}.
The convex hull of these genotypes, which might be viewed as points in $\RR^n$, is a polytope called the \emph{genotope}. In non-technical terms, we can think of each genetic locus as a separate dimension. The collection all combinations of these loci (e.g. of double mutants, triple mutants, etc.) forms a cube where the dimension $n$ is the number of loci considered. Thus, for a two locus, bi-allelic set, the genotope is a square (i.e., two-dimensional cube) with vertices (i.e., corners) representing the wild type, each single mutant, and the double mutant.

Fitness landscapes then arise by mapping the reproductive success (or other fitness traits) of each genotype to its corresponding vertex on the cube. The concept of genetic epistasis defines an interaction between two genetic loci as the degree of non-additivity in their phenotypes. 
Shapes of fitness landscapes reveal the epistasis and are regular subdivisions of the genotope induced by genotype--phenotype mappings, meaning that the degree to which a set of vertices interact has a physical shape that we can measure (as studied in Beerenwinkel et al.~\cite{BPS:2007,BMC:2007}). 
Such subdivisions play a key role in determining interaction patterns among altered genes and pathways on the genotope, and they shed light on the possible orders in which genetic mutations might occur. For instance, where an epistatic interaction lowers fitness, evolutionary paths traversing that vertex would have lower odds of occurring.
The study of interaction patterns is a general one that applies also to economies, social networks, and food webs in ecology.

Recently the gut microbiome has arisen in biology as a major factor shaping the genotype--phenotype mappings in animals \cite{Kreznar:2017gr}.
The microbiome itself is an ecological interaction network of microbial species.
Resolving the structure of the microbiome interaction patterns and their impacts on genotype--phenotype mappings in the host is a major unsolved problem in biology.
However, the number of possible interactions is massive in general, as the number of species in the microbiome is on the order of hundreds to thousands. Thus, there is a great need to develop methods capable of detecting interactions without drowning in data.
Here, we focus on a naturally simple gut microbiome, the \emph{Drosophila} fruit fly\textquotesingle s gut, with only five species of bacteria.
For the purposes of the present paper, we keep the notations and terminology of genetic interactions, noting that they apply also to microbiome interactions.

In this work, we build on the approach developed by Beerenwinkel et al.~\cite{BPS:2007,BMC:2007} and propose a new method to process metric interaction information, also known as epistasis, usually arising from interactions among altered genes.
The main idea we bring to the theory of interactions are cluster partitions and cluster filtrations.
These deal with connected components, called \emph{clusters}, of some subgraph of the dual graph associated to a regular subdivision that is induced, e.g., by a genotype--phenotype mapping.
To build these clusters, we first associate positive weights to pairs of maximal and adjacent simplices in the regular subdivision.
Similar to an angle, these weights measure the deviation of the two adjacent simplices from being affinely dependent. 
Using these weights we form progressively bigger clusters by gluing the ones with lower weight together in a continuous process, which we call \emph{cluster filtration}.
The important new aspect in this algorithm is that the filtration process is designed to statistically distinguish an essential biological signal, encoded in the positive weight, from noise.
In this way, filtrations enable us to handle and interpret the usually vast interaction information, which is currently only
fully characterized for double and triple mutants; cf.\ \cite{BPS:2007,BMC:2007}.

To describe the strength of our approach, we consider fitness landscapes for microbiome-modified \emph{Drosophila} flies. We then describe similarities and differences to previous approaches.
More precisely, the data set we inspect in this work consists of \emph{Drosophila} flies prepared with up to five different bacterial species in their gut.
We then view each bacterial combination as a genotype. In this way, the genotope is given by a 5-dimensional cube.
The fitness landscapes we consider are defined by daily fecundity (referred to as \fec), time to death (\ttd) and development time (\dev) mappings.
For each such fitness landscape we study the induced regular subdivision and describe their properties in the language of clusters and cluster filtrations.
To compare the epistatic information between fitness landscapes we use cluster partitions.
Our results show that cluster filtrations detect interactions when these are present in the sense of \cite{BPS:2007,BMC:2007}. Additionally, we locate statistically relevant and previously undetected epistatic information.
Comparing our findings with previous studies confirms that cluster filtrations can also be used to strengthen existing analysis
and prompt new possible conclusions in interaction networks.
Cluster partitions and their filtrations are a new mathematical idea, and we hope that this will find applications also beyond \emph{Drosophila} microbiome data.




\section{Mathematical background and terminology}
\label{sec:background}

Our approach relies on the theoretical framework for revealing epistatic interactions 
in genetic systems given in \cite{BPS:2007,BMC:2007}, and we use the same terminology. 
The theory of regular subdivisions is developed in the 2010 monograph by DeLoera, Rambau, and Santos~\cite{Triangulations}.

\subsection{Genotopes and their regular subdivisions}
We consider a fixed $n$-dimensional convex polytope $P$ in $\RR^n$.
That is, $P$ is the convex hull of finitely many points and we will assume that $P$ affinely spans the entire space.
A point $v\in P$ for which there exists an affine hyperplane which meets $P$ only in $v$ is called a \emph{vertex} of $P$.
The set of vertices, denoted as $V$, forms the unique minimal set which generates $P$ as the convex hull.
Our second ingredient is a \emph{height function} on the vertices, which is any function $h:V\to\RR$ that assigns a real number to each vertex of $P$.

We will be particularly interested in the case where $V=\{0,1\}^{n}$, and $P=[0,1]^n$ is the $n$-dimensional unit cube.
Following the approach of \cite{BPS:2007,BMC:2007} we call $[0,1]^{n}$ the \emph{genotope} of an $n$-biallelic system. 
The vertices in $\{0,1\}^{n}$ are identified with binary strings of length $n$ called \emph{genotypes}.
In the biological applications we have in mind, the points in the genotope correspond to the allele frequencies in a population; cf.\ Section~\ref{subsec:allele-freqs} below.
Height functions then correspond to traits, such as reproductive fitness of an organism or other experimental measurements ---also called \emph{phenotypes}--- on the genotypes of the $n$-biallelic system.

The set of lifted points
\[
  V(h) \ := \ \SetOf{(v,h(v))}{v\in V}
\]
generates a polytope $P(h):=\conv V(h)$ in $\RR^{n+1}$.
We will assume that the height function $h$ is \emph{nontrivial} in the sense that the lifted polytope $P(h)$ has full dimension $n+1$.
By construction, the points in $V(h)$ are precisely the vertices of $P(h)$.
In general, there are three types of facets of $P(h)$: if $\nu$ is an outward normal vector on the facet $F$, then $F$ is called an \emph{upper/vertical/lower facet} of $P(h)$ if the $(n{+}1)$-st coordinate of $\nu$ is positive/zero/negative.
It may happen that there are no vertical facets, but there are always upper and lower facets.
The upper facets form a polyhedral ball sitting in the boundary complex of the lifted polytope $P(h)$.
Projecting them back to $P$, by omitting the last coordinate, yields a polyhedral subdivision $\subdivision=\subdivision(V,h)$ of $P$.
A \emph{polyhedral subdivision} of $P$ is a finite family of polyhedra, whose elements we call the \emph{cells} of the subdivision, such that each face of a cell is a cell, and the intersection of any two cells is a (possibly empty) cell.
Those subdivisions which are induced by a height function are called \emph{regular}; cf.\ Definition 2.3.1 and Lemma 2.3.11 in \cite{Triangulations}. 
A polyhedral subdivision for which all cells are simplices is called a \emph{triangulation}.
Triangulations induced by a height function 
are generic in the following sense.
If each value of the height function is chosen at random (e.g., uniformly in a fixed interval) then the induced regular subdivision is a triangulation almost surely.

The height function $h$ is called a \emph{fitness landscape} in \cite[\S3]{BPS:2007}.
The genotope subdivision $\subdivision(V,h)$ is known as the \emph{shape} of the fitness landscape $h$.
Geometric properties of these shapes of fitness reflect interactions among organisms or genotypes.
In the biological setting we have in mind, height functions are given by certain fittings of replicated measurements. 
Depending on the nature of the data, these fittings can, for instance, be means, medians or modes of the observed measurements, as well as expected values of nonparametric density estimators, described as in \cite[Chapter 9]{MR1948588}.
As such we can assume that the height functions are generic, so they 
induce regular triangulations of the $n$-cube $[0,1]^{n}$.
In the concrete computations below, we consider means of replicated measurements.

Before we continue with our exposition, let us consider an example which we will revisit later.
This is about the smallest nontrivial case which arises.
\begin{example}\label{ex:running_ex}
  We consider a $2$-biallelic system, and so the genotope is the unit square $P=[0,1]^2$.
  Its four vertices $00$, $10$, $01$ and $11$ form the set~$V$; here, $01$ is shorthand notation for the point with coordinates $(0,1)$.
  Assume that some measurement gives the height function $h$ which reads
  \begin{equation}\label{eq:running_ex}
    \begin{array}{llll}
      h(00) = 53.25 & h(10) = 46.65 & h(01) = 43.16 &  h(11) = 43.48
    \end{array}
    \enspace .
  \end{equation}
  The lifted polytope $P(h)$ is a $3$-dimensional simplex (a.k.a.\ tetrahedron).
  In this case there are two upper and two lower facets; no vertical ones.
  Figure~\ref{fig:running_ex} shows the upper facets of $P(h)$ and the resulting genotope subdivision $\subdivision(V,h)$.
  The latter is a triangulation with two maximal cells, indicated in green and red.
\end{example}

\begin{figure}[ht]
  \begingroup
  \newcommand\radius{2.5}
    \includegraphics[width=\textwidth]{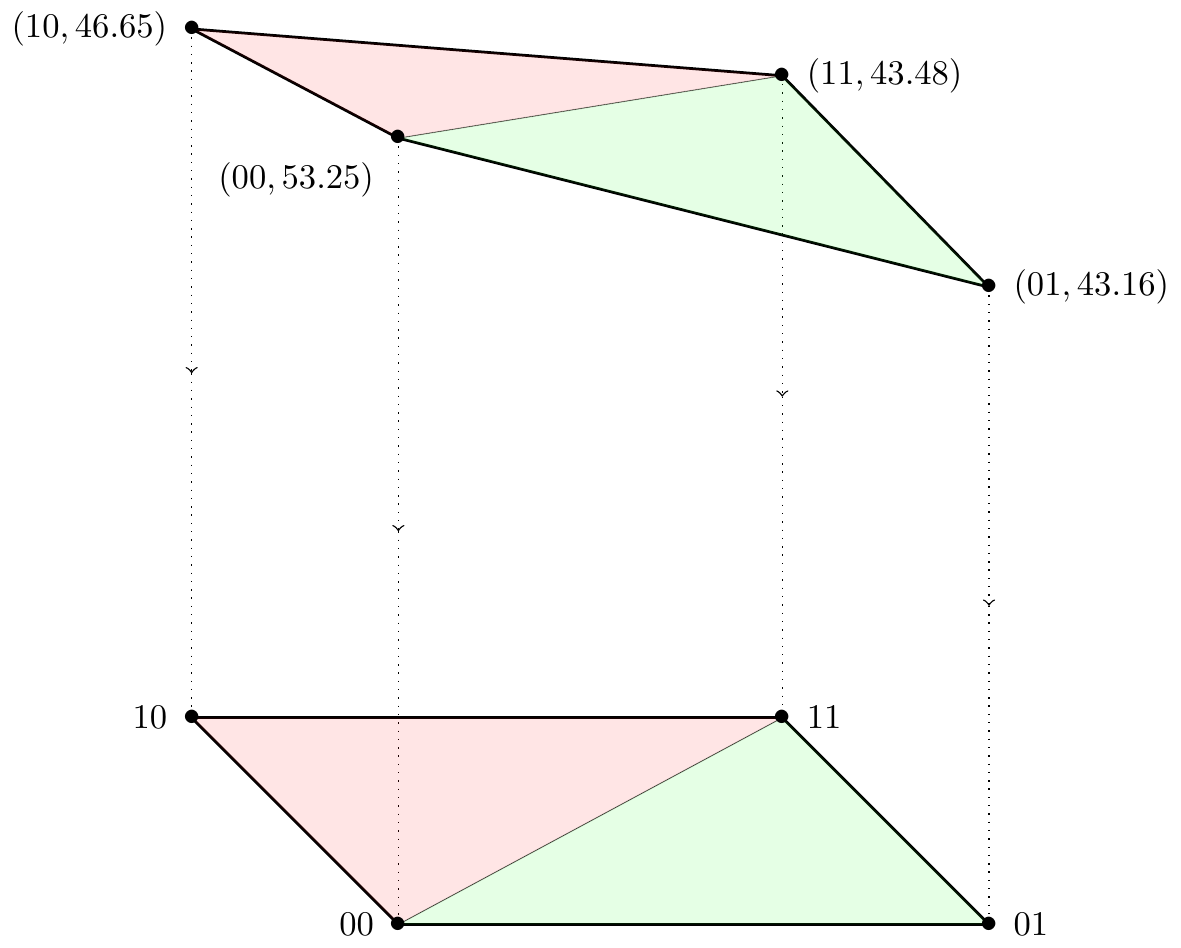}  
  \endgroup
  \caption{%
    Upper facets of the lifted polytope $[0,1]^2(h)$ and the induced regular triangulation $\subdivision(\{0,1\}^2,h)$.}
  \label{fig:running_ex}
\end{figure}

\begin{remark}
  Most polytopes admit triangulations which are not regular, i.e., not induced by any height function; cf.\ \cite[Theorem~6.3.11]{Triangulations} for examples of non-regular triangulations of $[0,1]^n$ for $n\geq 4$. While the triangulations of interest here will always be regular, the algorithmic methods discussed below generalize to the nonregular setting.
\end{remark}

\subsection{Fittest populations}
\label{subsec:allele-freqs}
Again we consider the genotypes $V=\{0,1\}^n$ of an $n$-biallelic system.
A map $p:V\to\RR$ is a \emph{(relative) population} if it attains only nonnegative values which sum to one.
This yields a point
\[
  \rho \ = \ \rho(p) \ := \ \sum_{v\in V} p(v)v
\]
in the genotope $[0,1]^n$, which is the \emph{allele frequency vector}.
Its coordinate $\rho_i$ describes the probability for the population $p$ to have allele $1$ in its $i$-th locus.
The set of all relative populations, denoted as $\Delta_V$, is a simplex of dimension $2^n-1$.

Now we add the height function $h:V\to\RR$ to the picture.
For a fixed allele frequency vector $w\in[0,1]^n$ this gives rise to the linear program
\begin{equation}\label{eq:LP}
  \begin{array}{ll}
    \text{maximize} & h \cdot p\\
    \text{subject to} & p\in\Delta_V \text{ and } \rho(p)=w \enspace.
  \end{array}
  \tag{$\LP(h,w)$}
\end{equation}
The coordinates of the vector $p$ are the variables to be determined.
If $h$ and $w$ both are generic then $\LP(h,w)$ has a unique optimal solution, the \emph{fittest population} $p^*=p^*(h,w)$, and this a vertex of the polytope
\[
  \Delta_{V,w} \ := \ \SetOf{p\in\Delta_V}{\rho(p)=w} \enspace ,
\]
which is contained in the $2^n$-dimensional vector space $\RR^V$.
The condition $\rho(p)=w$ is equivalent to $n$ linear equations, one for each coordinate of the allele frequency vector $w$.
It follows that the fittest population $p^*$ is the convex combination of at most $n+1$ vertices of $\Delta_V$, and the projection $\rho(p^*)=w$ gives rise to a representation of $w=\lambda_1v_1+\dots+\lambda_{n+1}v_{n+1}$, with $\lambda_i\geq 0$ and $\sum\lambda_i=1$, as the convex combination of at most $n+1$ genotypes $v_{1},\dots,v_{n+1}\in V$.
These genotypes are precisely the vertices of the unique simplex $s$ of $\subdivision(V,h)$ which contains $w$.
The genericity of $h$ implies that $\subdivision(V,h)$ is a triangulation, 
while the genericity of $w$ implies that the simplex $s$ is unique.
The optimal value of $\LP(h,w)$ is $h \cdot p^*=\sum \lambda_i(h(v_{i}))$.
In this way we obtain the piecewise linear function
\[
  \begin{split}
    h^* \colon [0,1]^{n}&\longrightarrow \RR\\
    w &\longmapsto h \cdot p^*(h,w)
  \end{split}
\]
on the genotope.
Now the regions of linearity of $h^*$ coincide with the maximal cells 
of the regular triangulation $\subdivision(V,h)$. 

Applying our methods to measurement data will almost always establish generic height functions, and thus the relevant polyhedral subdivisions will almost always be triangulations.

\begin{remark}
For every fitness landscape $h$ there are two shapes. 
One shape is induced by the upper facets of the convex hull 
$P(h)$, the other by the lower facets.
Considering simultaneously both shapes might be advisable.
However, previous approaches \cite{BPS:2007,BMC:2007} adopted the convention of considering upper facets in the definition of a regular subdivision, see also \cite{Hallgrimsdottir:2008aa}.
This convention is consistent with modelling the fitness of a population via the linear program $\LP(h,w)$, which is a maximization problem.

With a few straightforward technical adjustments, such as phrasing fitness in terms of a minimization problem, all our results also hold using lower facets.
\end{remark}

\subsection{The dual graph of a subdivision and its complexity}
Let $\subdivision$ be a polyhedral subdivision of some $n$-dimensional polytope $P$.
Two maximal cells of $\subdivision$ are \emph{adjacent} if they share a common $(n-1)$-dimensional cell. 
This adjacency relation induces a graph structure as follows: the nodes are the maximal cells (of dimension $n$), and an edge connects two nodes if the two cells are  adjacent.
This is known as the \emph{dual graph} of $\subdivision$, and we denote it by $\dualgraph(\subdivision)$.
Notice that $\Gamma(\subdivision)$ is always connected.
The edges of $\dualgraph(\subdivision)$ are the \emph{dual edges} of~$\subdivision$.

The lemma below gives essential complexity bounds in the case of most interest to us.
Let us denote the minimal number of maximal cells of any triangulation of $[0,1]^n$ by $k_*(n)$.

\begin{lemma}\label{lem:complexity}
  Let $\subdivision$ be a triangulation of the unit cube $[0,1]^n$, and let $k$ be the number of nodes of $\dualgraph(\subdivision)$.
  Then
  \begin{equation}\label{eq:complexity}
       2^n-n \ \leq \ k_*(n) \ \leq \ k \ \leq n! \enspace .
  \end{equation}
  The lower bound is attained if and only if $\Gamma(\subdivision)$ has no cycles.
  Moreover, the number of dual edges is at most $k(n+1)/2 - n\cdot k_*(n-1) < (n+1)!$.
  \end{lemma}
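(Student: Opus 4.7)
The plan is to exploit two elementary tools: volumes of lattice simplices, and double counting of $(n-1)$-faces in the triangulation. Throughout, write $k$ for the number of maximal simplices of $\subdivision$, $\beta$ for the number of internal $(n-1)$-faces (which is exactly the number of edges of $\dualgraph(\subdivision)$), and $b$ for the number of boundary $(n-1)$-faces. Counting the incidences of maximal simplices with their facets two ways yields the identity
\[
  (n+1)\,k \ = \ 2\beta + b \enspace ,
\]
on which every estimate below rests.

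For the upper bound $k \leq n!$ I would argue from volume. Every $n$-simplex with vertices in $\{0,1\}^n$ has Euclidean volume at least $1/n!$, since its vertex coordinates are integers; as the simplices of $\subdivision$ tile the unit cube, their count cannot exceed $n!$.

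For the lower bound $2^n - n \leq k_*(n)$ I would proceed by induction on $n$, the base case $n=1$ being immediate. For the inductive step, connectedness of $\dualgraph(\subdivision)$ gives $\beta \geq k-1$, while the restriction of $\subdivision$ to each of the $2n$ cubical facets of $[0,1]^n$ is itself a triangulation of an $(n{-}1)$-cube with vertices in $\{0,1\}^{n-1}$ and therefore contributes at least $k_*(n-1)$ simplices to the boundary, giving $b \geq 2n\cdot k_*(n-1)$. Substituting both estimates into the identity above and invoking $k_*(n-1)\geq 2^{n-1} - (n-1)$ reduces the desired inequality to $2^n \geq n(n-1)+2$, which I would verify directly for all $n\geq 1$. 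The characterization of equality is where the dual graph reappears: $k = 2^n - n$ forces equality throughout the chain, in particular $\beta = k-1$, which is precisely the statement that $\dualgraph(\subdivision)$ is a tree.

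Finally, the bound on the number of dual edges falls out of the same identity. Solving $(n+1)k = 2\beta + b$ for $\beta$ and substituting the previously established estimate $b \geq 2n\cdot k_*(n-1)$ gives $\beta \leq k(n+1)/2 - n\cdot k_*(n-1)$, and combining this with $k \leq n!$ yields the final bound $\beta \leq (n+1)!/2 - n\cdot k_*(n-1) < (n+1)!$. I expect the only delicate point in the whole argument to be tracking the equality conditions through the induction in order to see that the appearance of a cycle in $\dualgraph(\subdivision)$ is indeed the only obstruction to attaining the lower bound; everything else is routine double counting.
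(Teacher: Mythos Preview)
Your double-counting identity $(n+1)k = 2\beta + b$ together with $\beta \ge k-1$ and $b \ge 2n\,k_*(n-1)$ correctly yields both the inductive lower bound $k \ge 2^n - n$ and the dual-edge estimate; the latter is exactly the paper's argument, while for the former the paper simply appeals to \cite[Theorem~2.6.1]{Triangulations} without giving details.

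The gap you anticipate is real, and your framework does not close it. You argue that $k = 2^n - n$ forces equality throughout the chain and hence $\beta = k-1$; but the converse direction, that a tree dual graph forces $k = 2^n - n$, does not follow from the facet count. Setting $\beta = k-1$ in your identity leaves the slack $b - 2n\,k_*(n-1) \ge 0$ unconstrained, so $(n-1)k = 2n\,k_*(n-1) - 2 + \bigl(b - 2n\,k_*(n-1)\bigr)$ does not pin down $k$. (For $n \le 3$ your chain happens to be tight throughout, and for $n \ge 4$ both sides of the biconditional are in fact vacuous since $k_*(n) > 2^n - n$; but neither of these facts is visible from the facet count alone.) The clean argument counts vertices rather than facets: root the tree $\Gamma(\subdivision)$ and traverse it. The root contributes $n+1$ vertices and each subsequent simplex $t$ contributes only its satellite vertex $v_t$ not lying in the parent $s$. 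This $v_t$ is genuinely new, because the star of $v_t$ in $\subdivision$ is dually connected, so any earlier simplex containing $v_t$ would be joined to $t$ by a dual path avoiding $s$, contradicting acyclicity. Hence $2^n = (n+1) + (k-1)$, which gives $k = 2^n - n$ and, incidentally, shows that for $n \ge 4$ no triangulation of the cube can have an acyclic dual graph.
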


\begin{proof} The first part of the claim is a special case of \cite[Theorem 2.6.1]{Triangulations}.
 For the second part, observe that each $n$-simplex is adjacent to at most $n+1$ other simplices as this is the number of its facets.
  Yet $\subdivision$ induces a triangulation on each of the $2n$ facets of $[0,1]^{n}$.
  Therefore, at least $2n \cdot k_*(n-1)$ of the $k(n+1)$ cells of 
  dimension $n-1$ in $\subdivision$ lie in the boundary of $[0,1]^{n}$.
  These $(n{-}1)$-cells are contained in a unique maximal one.
  We arrive at the estimate of at most
  \[
    k(n+1) - 2n \cdot k_*(n-1) \ \leq \ (n+1)! - 2n(2^{n-1}-n+1)
  \]
  incident pairs of nodes and edges of $\Gamma(\subdivision)$.
  Dividing by two gives the upper bound on the number of dual edges.
\end{proof}

For $n=3$ we get $2^3-3=5$ as the lower bound in \eqref{eq:complexity}, and $3!=6$ as the upper bound.
Both bounds are tight; i.e., there are triangulations of $[0,1]^3$ with five and six facets, respectively.
While, for $n=4$, the lower bound in \eqref{eq:complexity} is only $2^4-4=12$ we have $k_*(4)=16$; cf.\ \cite[Example 6.3.14]{Triangulations}.
To determine the exact lower bound $k_*(n)$ for $n\geq 8$ is a difficult open problem; cf.\ \cite[\S6.3.3]{Triangulations} and Table~\ref{tab:cube-complexity} for an overview.
\begin{table}[ht]
  \caption{The minimal number $k_*(n)$ of maximal cells of a triangulation of $[0,1]^n$.}
  \label{tab:cube-complexity}
  \begin{tabular*}{.97\linewidth}{@{\extracolsep{\fill}}ccccccccc@{}}
    \toprule
    $n$ & 1 & 2 & 3 & 4 & 5 & 6 & 7 & 8 \\
    \midrule
    $k_*(n)$ & 1 & 2 & 5 & 16 & 67 & 308 & 1493 & $\leq 11\, 944$ \\[1.2ex]
    $2^n-n$ & 1 & 2 & 5 & 12 & 27 & 58 & 121 & 248 \\[1.2ex]
    $ \frac{2^nn!}{(n+1)^{(n+1)/2}}$ & 1 & 1.54 & 3 & 6.87 & 17.78 &50.78 & 157.5 & 524.41 \\
    \bottomrule
  \end{tabular*}
\end{table}

The proof of Lemma~\ref{lem:complexity} is based on the interplay between the size of a triangulation and the volumes of its maximal cells.
This can be carried further to derive bounds which are better asymptotically.
The key ingredient is Hadamard's famous problem of giving an upper bound for the determinant of a matrix with given entries; cf.\ \cite{Hadamard+problem} for a survey.
In our context this yields the following.
\begin{lemma}\label{lem:hadamard}
  The normalized volume of any simplex spanned by vertices of $[0,1]^n$ is bounded by
  \[
    \frac{(n+1)^{(n+1)/2}}{2^n} \enspace.
  \]
\end{lemma}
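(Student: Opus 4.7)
The plan is to identify the normalized volume of a simplex with the absolute value of a determinant, and then reduce the bound to the classical Hadamard maximum determinant problem. For an $n$-simplex $S=\conv\{v_0,v_1,\dots,v_n\}$ with $v_i\in\{0,1\}^n$, the standard formula $\vol(S)=\tfrac{1}{n!}|\det M|$, combined with $\nvol = n!\cdot \vol$, gives $\nvol(S)=|\det M|$, where $M$ is the $n\times n$ matrix whose $i$-th row is $v_i-v_0$. A priori the entries of $M$ lie only in $\{-1,0,1\}$.

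Next, I would use the symmetries of the cube to replace $M$ by a $\{0,1\}$-matrix of the same determinant in absolute value. Each coordinate reflection $x_j\mapsto 1-x_j$ preserves $[0,1]^n$, its vertex set, and Euclidean volumes. Composing the reflections indexed by those $j$ with $(v_0)_j=1$ transports $S$ to a simplex whose first vertex is the origin; equivalently, negating the corresponding columns of $M$ leaves $|\det M|$ unchanged and turns $M$ into a matrix with entries in $\{0,1\}$. Thus it suffices to bound $|\det M|$ for $M\in\{0,1\}^{n\times n}$.

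Finally, I would invoke the classical $\{0,1\}$-Hadamard trick of bordering $M$ to a $\pm 1$-matrix one size larger. Explicitly, let
\[
B \ = \ \begin{pmatrix} 1 & \mathbf{1}^{T} \\ -\mathbf{1} & 2M-J \end{pmatrix} \ \in \ \{-1,+1\}^{(n+1)\times(n+1)},
\]
where $J$ is the $n\times n$ all-ones matrix. Adding the top row of $B$ to each of the remaining rows zeros out the first column below the top and produces $2M$ in the lower right block; hence $|\det B|=2^n\,|\det M|$. Each row of $B$ has Euclidean length $\sqrt{n+1}$, so Hadamard's inequality $|\det B|\le \prod_i\|\mathrm{row}_i\|$ yields $|\det B|\le (n+1)^{(n+1)/2}$, and dividing by $2^n$ gives the claim.

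The only non-mechanical ingredient is the bordering construction producing $B$; everything else is unpacking definitions and using the cube's symmetry group. Since the $\{0,1\}$-maximum determinant problem is standard, as surveyed in \cite{Hadamard+problem}, I anticipate no genuine obstacle in carrying out this plan.
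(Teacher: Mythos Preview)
Your proposal is correct and follows exactly the route the paper indicates: the lemma is stated without proof, with the paper pointing to Hadamard's maximum determinant problem \cite{Hadamard+problem} as the key ingredient, and your argument (reduce to a $\{0,1\}$-matrix via cube symmetries, border to a $\pm 1$-matrix of size $n{+}1$, apply Hadamard's inequality) is the standard proof of that bound. Nothing needs to be changed.
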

This can be employed to derive a lower bound for $k_*(n)$.
The even better bound
\begin{equation}\label{eq:hadamard}
  \frac{2^nn!}{(n+1)^{(n+1)/2}} \ \leq \ k_*(n)
\end{equation}
arises from the Hadamard inequality for matrices with $\pm 1$ coefficients; cf.\ \cite[\S6.3.3]{Triangulations}.
Starting from $n\geq7$ the bound \eqref{eq:hadamard} is better than the more naive lower bound $2^n-n$ from Lemma~\ref{lem:complexity}.




\section{Clusters in fitness landscapes}\label{sec:clusters}

Our next goal is to show how epistatic information can be extracted from the dual graph associated to the triangulation of the genotope $[0,1]^{n}$ by a given height function.
To do this we first associate a positive weight with each dual edge, we then relate this information to interaction coordinates and epistasis.
Later, we introduce cluster and cluster filtrations as a new tool to filter, summarize and analyze epistatic information. 




\subsection{Epistatic weight}\label{subsec:epistaticweight}

As before, let $P$ be an arbitrary $n$-polytope in $\RR^{n}$, equipped with a generic height function $h$.
This induces a regular triangulation $\subdivision=\subdivision(V,h)$, where $V$ is the vertex set of $P$.
Let $s$ and $t$ be two adjacent $n$-simplices in $\subdivision$.
Then there are altogether $n+2$ vertices $v_1,v_2,\dots ,v_{n+2}$ of $P$ such that
\[
  s \ = \ \conv\{v_1,v_2,\dots,v_{n+1}\} \quad \text{and} \quad t \ = \ \conv\{v_2,v_3,\dots,v_{n+2}\} \,.
\]
We consider the $(n+2){\times}(n+2)$-matrix
\begin{equation}\label{eq:epistatic-matrix}
  E_h(s,t) \ := \
  \begin{pmatrix}
    1 & v_{1,1} & v_{1,2} & \dots & v_{1,n} & h(v_1) \\
    1 & v_{2,1} & v_{2,2} &  \dots & v_{2,n} & h(v_2) \\
    \vdots  & \vdots &  \vdots & \vdots & \vdots &\vdots \\
    1 & v_{n+2,1} & v_{n+2,2} & \dots & v_{n+2,n}  & h(v_{n+2})
  \end{pmatrix} \enspace ,
\end{equation}
where $v_{i1},v_{i2},\dots,v_{in}$ are the coordinates of $v_i\in\RR^n$.
The \emph{epistatic weight} of the dual edge connecting $s$ and $t$ is then defined as 
\begin{equation}\label{eq:epistatic-weight}
  e_h(s,t) \ := \ \bigl|\det E_h(s,t)\bigr| \cdot \frac{\nvol(s\cap t)}{\nvol{s}\cdot\nvol{t}} \enspace ,
\end{equation}
where $\nvol{s}$ is the \emph{normalized volume} of $s$, i.e., the determinant of the submatrix $N(s)$ 
obtained from $E_h(s,t)$ by omitting the last column and the row corresponding to the vertex $v_{n+2}$ not lying in $s$.
Similarly, $N(t)$ is the submatrix of $E_h(s,t)$ obtained by omitting the last column and the row corresponding to the vertex $v_1$ not lying in $t$.
The determinant of $E_h(s,t)$ is the volume of the convex hull of the $(n+2)$ vertices of $s$ and $t$ lifted to $\RR^{n+1}$ by the height function $h$.
The intersection $s \cap t$ is spanned by $v_2,v_3,\dots,v_{n+1}$ and separates the two \emph{satellite vertices} $v_1$ and $v_{n+2}$. Its normalized $(n-1)$-dimensional volume $\nvol(s\cap t)$ coincides with the $n$-dimensional normalized volume of a pyramid over $s\cap t$ with height $1$. 

The epistatic weight $e_h(s,t)$ vanishes if the lifted point configuration of $s \cup t$ with respect to $h$ lies in a hyperplane and is positive otherwise. If the denominator of \eqref{eq:epistatic-weight} is one, we say that
the simplices $s$ and $t$ are {\em unimodular}. One then says that a {\em triangulation is unimodular}, if all its simplices are unimodular.

For $\lambda>0$, the scaled height function $\lambda h$ over the scaled genotope $\lambda P$ provides the same combinatorial data, i.e., the same labeled maximal cells.
The factor ${\nvol(s\cap t)}/{(\nvol{s}\cdot\nvol{t})}$ in \eqref{eq:epistatic-weight} makes the epistatic weight $e_{\lambda h}(s,t) =e_{h}(s,t)$ invariant under the scaling by $\lambda$.

If $h'$ is non-trivial and not generic, then $\subdivision'=\subdivision(V,h')$ is a non-trivial regular subdivision. By Lemma 2.3.4 in \cite{Triangulations}, all maximal cells of $\subdivision'$ are full-dimensional, but not necessarily simplices. Let $C$ and $D$ be two maximal cells in $\subdivision'$ which are adjacent, i.e., the intersection  $C\cap D$ is a common face of codimension one. 
A maximal collection of points $v_1,v_2,\dots,v_{n+2}$ is called a \emph{bipyramid} in the dual edge $(C,D)$ of $\subdivision'$ if $v_1,\dots,v_{n+1}$ are affinely independent vertices of $C$ and $v_2,\dots,v_{n+2}$ are affinely independent vertices of $D$. In this way a bipyramid spans a pair $(s,t)$ of adjacent $n$-simplices contained in the union of $C$ and $D$, and $s\cap t\subseteq C\cap D$. For each such pair $(s,t)$ one can form the matrix $E_{h'}(s,t)$ as in \eqref{eq:epistatic-matrix}, and find the corresponding epistatic value $e_{h'}(s,t)$ via \eqref{eq:epistatic-weight}. To define  $e_{h'}(C,D)$, we take the mean over all $e_{h'}(s,t)$ where $(s,t)$ runs through the bipyramids in the dual edge $(C,D)$.

\begin{example}\label{ex:running_part2}
  We continue Example~\ref{ex:running_ex} where $P=[0,1]^2$, and $h$ is given by \eqref{eq:running_ex}.
  The induced triangulation has precisely two maximal cells. 
  The dual graph has a single edge connecting these two cells.
  We compute the epistatic weight
  \begin{equation}\label{eq:running_epi}
    \begin{split}
      e_{h}(\{00,10,11\},\{00,01,11\}) \ &= \ \det\begin{pmatrix}
        1 & 0 & 0 & 53.25\\
        1 & 1 & 0 & 46.65\\
        1 & 0& 1 & 43.16\\ 
        1 & 1 & 1 & 43.48
      \end{pmatrix} \cdot \sqrt{2}  \ \approx \ 9.786 \enspace .
    \end{split}
  \end{equation}
  Notice that the denominator in \eqref{eq:epistatic-weight} is one.
  The factor $\sqrt{2}$ is the one-dimensional volume of the shared face $\conv(\{00,11\})$.
  For this 2-locus system, the computation \eqref{eq:running_epi} thus agrees with the usual
  epistasis formula of \cite[Example 3.7]{BPS:2007} up to the factor $\sqrt{2}$, which does not depend on~$h$:
  \[
    \epsilon(00,11,10,01) \ := \ h(00)+h(11)-h(10)-h(01) \enspace.
  \]
  Biologically, the non-vanishing of the epistatic weight means that the additive epistatic assumption is violated:
  the fitness of the double mutant is higher than what one would expect by knowing the fitness of the single mutants and the wild type. 
\end{example}

\begin{remark}
Since we did not fix orderings of the vertices of $s$ and $t$, the matrix $E_h(s,t)$ is only defined up to row reordering.
However, our approach solely rests on the epistatic weights from \eqref{eq:epistatic-weight}; taking absolute values here makes those values independent of any ordering.
\end{remark}

Let us now summarize the biological information encoded in the dual graph valued by the epistatic weight.
First, each node in $\Gamma(\subdivision)$ corresponds to an $(n+1)$-tuple of genotypes that can be realized as the support of a fittest population, i.e., of an optimal solution of $\LP(h,w)$ for some $w$.
In this sense, we can state (by a slight abuse of language) that
each edge of $\Gamma(\subdivision)$ describes 
the union of genotypes occurring in two fittest populations. 
This union consists of $n+2$ genotypes with $n$ genotypes shared by the two 
fittest populations and where the remaining two are satellites. 
The edges incident to a given node $s$ in $\Gamma(\subdivision)$ thus
encode exactly those genotypes which together with $n$ 
genotypes of $s$ form a fittest population in the above sense.

Finally, the epistatic weight associated with an edge $e=\{s,t\}$ of 
$\Gamma(\subdivision)$ measures how 
far the supporting genotypes of the two adjacent fittest populations $s$ and $t$ are away from being affinely dependent. 
In this sense, $e_h(s,t)$ can be seen as a deformation of the 
usual statistical correlation notion, see the discussion in Section \ref{sec:statsignificance1}.
The non-vanishing of the epistatic weight of $e$ thus means that knowing the 
fitness of the genotypes supported by the fittest population $s$
does not allow us to deduce the fitness of the satellite genotype not in $s$.



\subsection{Cluster partitions and epistatic filtrations}\label{subsec:clusters}

Let $\dualgraph'$ be a spanning subgraph of the dual graph $\dualgraph=\dualgraph(\subdivision)$ of the subdivision $\subdivision$.
\emph{Spanning} means that $\dualgraph'$ has all the nodes of $\dualgraph$ but some dual edges may be missing.
We call a connected component of $\dualgraph'$ a \emph{$\dualgraph'$-cluster}. 
Further, we call the partition of the nodes of $\dualgraph$ into $\dualgraph'$-clusters the \emph{$\dualgraph'$-cluster partition} of $\subdivision$.
That is, a cluster partition is an additional combinatorial structure imposed on $\subdivision$ by the choice of the spanning subgraph $\dualgraph'$.

Our next goal is to define a filtration process on $\subdivision$ by a sequence of nested cluster partitions.
For this, consider a \emph{threshold value} $\theta$, where $\theta\geq 0$.
The specification of a threshold value defines a not necessarily connected subgraph, $\dualgraph(\theta)$, of $\dualgraph$ by deleting those dual edges whose normalized 
epistatic weight exceeds $\theta$.
The $\dualgraph(\theta)$-clusters and the $\dualgraph(\theta)$-cluster partition are shortened to \emph{$\theta$-clusters} and the \emph{$\theta$-cluster partition}, respectively.

The intuition behind these concepts comes from the following.
Consider two height functions, $h'$ and $h''$, on the vertices of $P$ such that $h'$ is not generic and such that $h''$ is a small perturbation.
More precisely, we assume that there is a positive number $\epsilon$ such that $|h''(v)|<\epsilon$ for all $v\in V$.
Our assumption on $h'$ means that $\subdivision(V,h')$ is not a triangulation.
We call $h''$ \emph{sufficiently generic} for $h'$ if $h'+h''$ is generic, i.e., $\subdivision(V,h'+h'')$ is a regular triangulation.
Note that it does not suffice to require $h''$ to be generic: e.g., if $h'$ is generic, then $-h'$ is generic, too, but their sum $h'+(-h')\equiv0$ is not.
Now $\theta$-cluster partitions can detect the perturbation by $h''$ in the following sense.

\begin{theorem}\label{thm:undo}
  For every height function $h'$ there are positive numbers $\epsilon$ and~$\theta$ such that the following holds:
  If $h''$ is sufficiently generic for $h'$ and additionally satisfies $|h''(v)|<\epsilon$ for all $v\in V$, then each maximal cell of $\subdivision(V,h')$ corresponds to exactly one $\theta$-cluster of the triangulation $\subdivision(V,h'+h'')$.
\end{theorem}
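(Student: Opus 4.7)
The plan is to use multilinearity of the determinant in the last column of $E_h(s,t)$ to separate ``intra-cell'' adjacencies of the refined triangulation (where the associated bipyramid sits inside one maximal cell of $\subdivision(V,h')$) from ``inter-cell'' ones (which straddle two such cells). The intra-cell epistatic weights will be $O(\epsilon)$, while the inter-cell ones will admit a uniform positive lower bound depending only on $h'$ and $V$; a suitable intermediate threshold $\theta$ will then cut exactly along the boundaries of the coarse subdivision.

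First, I would invoke the standard stability of regular subdivisions under small perturbations (see \cite{Triangulations}): for all sufficiently small $\epsilon$, whenever $h''$ is sufficiently generic for $h'$ and $|h''(v)|<\epsilon$, the fine triangulation $\subdivision(V,h'+h'')$ refines the coarse subdivision $\subdivision(V,h')$. In particular, every maximal simplex $s$ of the fine triangulation lies in a unique maximal cell $C(s)$ of the coarse one, and for every adjacent pair $(s,t)$ either $C(s)=C(t)$ (intra-cell) or $C(s)\neq C(t)$ are two adjacent coarse cells sharing the facet that contains $s\cap t$ (inter-cell).

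The key estimate uses linearity of the determinant in the last column:
\[
  \det E_{h'+h''}(s,t) \ = \ \det E_{h'}(s,t) + \det E_{h''}(s,t).
\]
Expanding $\det E_{h''}(s,t)$ along the last column produces a sum of $h''$-values times minors built from $V$; since only finitely many bipyramids with vertices in $V$ exist, one obtains a uniform bound $|\det E_{h''}(s,t)|\leq \epsilon M$ with $M=M(V)$. For an intra-cell pair, $h'$ agrees on the vertices of $s\cup t$ with the affine function supporting the upper facet of $P(h')$ above $C(s)$, so $\det E_{h'}(s,t)=0$ and hence $|\det E_{h'+h''}(s,t)|\leq\epsilon M$. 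For an inter-cell pair, the $h'$-lifts of the vertices of $s$ and $t$ lie on two distinct supporting hyperplanes of $P(h')$, so they are not coplanar in $\RR^{n+1}$; taking the minimum of $|\det E_{h'}(s,t)|$ over the finite set of possible inter-cell bipyramids yields a constant $\mu=\mu(h',V)>0$, and
\[
  |\det E_{h'+h''}(s,t)|\ \geq\ \mu-\epsilon M\ \geq\ \mu/2
\]
as soon as $\epsilon\leq\mu/(2M)$. Meanwhile, the volume factor $\nvol(s\cap t)/(\nvol(s)\nvol(t))$ in \eqref{eq:epistatic-weight} admits uniform positive bounds $L$ and $U$ by the same finiteness argument.

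Picking any $\theta$ in the interval $(\epsilon M U,\ (\mu/2)L)$, which is nonempty for all sufficiently small $\epsilon$, ensures that $e_{h'+h''}(s,t)<\theta$ for every intra-cell pair while $e_{h'+h''}(s,t)>\theta$ for every inter-cell one. Thus $\dualgraph(\theta)$ is obtained from $\dualgraph(\subdivision(V,h'+h''))$ by deleting precisely the inter-cell edges. Since the restriction of $\subdivision(V,h'+h'')$ to each coarse cell $C$ is itself a triangulation of the polytope $C$, its dual graph is connected, and so the surviving intra-cell edges form a connected spanning subgraph on each coarse cell. The $\theta$-clusters therefore coincide with the preimages of the maximal cells of $\subdivision(V,h')$, yielding the claimed bijection. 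The main obstacle, I expect, is establishing the uniform lower bound $\mu$ for inter-cell pairs: one has to rule out degenerate bipyramids whose $h'$-lift is accidentally flat despite spanning two coarse cells, and this rests precisely on the observation that the supporting affine functions of adjacent upper facets of $P(h')$ are distinct.
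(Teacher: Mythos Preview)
Your proof is correct and follows essentially the same route as the paper: both invoke the refinement property of regular subdivisions under small perturbations, use linearity of the determinant in the height column to show that intra-cell (``local'') epistatic weights are $O(\epsilon)$ while inter-cell (``nonlocal'') ones admit a uniform positive lower bound depending only on $h'$, and then choose $\theta$ in the resulting gap. Your version is somewhat more explicit about the quantitative constants $M,\mu,L,U$ and about the connectivity of the dual graph of the triangulation restricted to each coarse cell (which the paper leaves implicit), but the argument and the key obstacle you identify---that the lifted inter-cell bipyramid is never flat because the satellite vertex lies off the common facet hyperplane---are the same.
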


\begin{proof}
  First assume that $h'$ is generic, and thus the partition $\subdivision=\subdivision(V,h')$ induced by $h'$ is a triangulation.
  Pick $\epsilon>0$ sufficiently small such that for $|h''(v)|<\epsilon$ we have $\subdivision(V,h'+h'')=\subdivision(V,h')$.
  That is, $h'$ and $h'+h''$ lie in the same secondary cone, defined as in \cite[Def. 5.2.1]{Triangulations}.
  Such an $\epsilon$ can always be found since the secondary cone of a triangulation is an open subset of $\RR^{m-n-1}$, where $m$ is the cardinality of $V$; cf.\ \cite[\S5.2.1]{Triangulations}.
  Then the claim in the generic case becomes trivial, e.g., with $\theta=0$.
  The maximal cells of $\subdivision(V,h')$ are precisely the $0$-clusters of $\subdivision(V,h'+h'')$.

  A second case arises when $\subdivision(V,h')$ has only one maximal cell given by the entire polytope $\conv(V)$.
  Then we can pick $\epsilon>0$ arbitrary and make $\theta$ sufficiently large such that the $\theta$-cluster partition of $\subdivision(V,h'+h'')$ comprises a single cluster.
  
  Now we consider the only interesting case where $h'$ is not generic and $\subdivision:=\subdivision(V,h')$ is a non-trivial regular decomposition of $V$ induced by $h'$ which is not necessarily a triangulation.
  Here we pick $\epsilon_{1}>0$ small enough such that for all sufficiently generic $h''$ with $|h''(v)|<\epsilon_1$ the subdivision $\subdivision(V,h'+h'')$ is a triangulation which refines $\subdivision$.  
  Such an $\epsilon$ always exists as shown for instance in Lemma 2.3.15 in \cite{Triangulations}. 
  This claim can equivalently be expressed by saying that $h'$ lies in the boundary of the (full-dimensional) secondary cone of $\subdivision':=\subdivision(V,h'+h'')$.

 Let $C$ and $D$ be two maximal cells in $\subdivision$ which are adjacent. Let $(s,t)$ be the
 adjacent $n$-simplices spanned by a bipyramid contained in the union of 
 $C$ and $D$, and $s\cap t\subseteq C\cap D$.
 Let $e_{h'}(s,t)$ be the corresponding epistatic value given by \eqref{eq:epistatic-weight}.
 Now we let
  \begin{equation}\label{eq:undo-theta}
    \theta \ := \ \tfrac{1}{2} \cdot \min \SetOf{ e_{h'}(s,t) }{ s\cup t \text{ bipyramid in some dual edge of }   \Gamma(\subdivision) } \enspace ,
  \end{equation}
  which is the minimum taken over a finite set of non-zero positive real numbers and thus $\theta>0$. 
  Let $(s',t')$ be adjacent $n$-simplices in the triangulation $\subdivision'$.
  We call the dual edge $(s',t')$ \emph{local} if $s'$ and $t'$ are contained in some maximal cell of $\subdivision$.
  Now, the maximal cells of $\subdivision$ belong to a $\theta$-cluster partition of $\subdivision'$ if and only if
  \begin{equation}\label{formula:cond_for_clusterind}
    e_{h'+h''}(s',t')
    \begin{cases}
      < \theta & \text{if $(s',t')$ is local dual edge}\\
      \geq \theta & \text{otherwise} \enspace .
    \end{cases}
  \end{equation}
  We observe that setting $h''\equiv0$ yields $e_{h'+h''}(s',t')=0$ for any local dual edge $(s',t')$ since then the $n+2$ vertices of $s'\cup t'$, lifted by $h'=h'+0$ are contained in some hyperplane.
  Hence, since the determinant is multilinear (and thus continuous), we can find $\epsilon_2>0$ such that all $h''$ with $|h''(v)|<\epsilon_2$ satisfy $e_{h'+h''}(s',t')<\theta$ for all local dual edges $(s',t')$. 
  An explicit expression for $\epsilon_2$ can be given in terms of the maximal minors of the matrix formed from all vertices lifted by $h$; we leave the details to the reader.
  In this way, all local dual edges of $\subdivision'$ are contained in a $\theta$-cluster. 
  For a nonlocal dual edge $(s',t')$ of $\subdivision'$ observe that $s'\cup t'$ is a bipyramid in some dual edge of $\subdivision$ and $e_{h'}(s',t')>\theta$ holds by definition \eqref{eq:undo-theta}.
  Again by continuity of $e_{h'+h''}$ in $h''$ we get an $\epsilon_{3}$-neighbourhood of $h'$ where all nonlocal dual edges of $\subdivision'$ have epistatic weight lying above $\theta$ and form singleton $\theta$-clusters.
  Setting $\epsilon:=\min(\epsilon_1,\epsilon_2, \epsilon_3)$ settles the claim.
\end{proof}

\begin{example}
  We further continue the Example~\ref{ex:running_ex}.
  Suppose $h'\equiv 0$ is the null function and $h''$ is the height function $h$ given in \eqref{eq:running_ex}.
  Then $\subdivision(\{0,1\}^2,h')$ is the trivial subdivision of $[0,1]^2$ by itself, and $\subdivision(\{0,1\}^2,h'+h'')$ is the triangulation shown in Figure~\ref{fig:running_ex}.
  The epistatic weight of the single edge is 9.786 as determined in \eqref{eq:running_epi}.
  That is, for all $\theta \geq 9.786$ the cluster partition recovers the trivial subdivision of the unit square $[0,1]^2$.
\end{example}

Varying $\theta$ yields a stepwise coarsening of $\subdivision$ into larger and larger clusters which we call the \emph{cluster filtration} of $\subdivision$.
For sufficiently small values of $\theta$ (including, e.g., $\theta=0$) the $\theta$-cluster partition simply consists of the partition of the node set of $\dualgraph$ into singletons.
On the other hand, for sufficiently large values of $\theta$ the $\theta$-cluster partition consists of a single cluster which comprises all the nodes.
Letting the threshold value vary between these two extremes provides a simple descriptor of the \enquote{biologically relevant signal} in the data and allows us to separate epistatic information from noise.
Therefore, we also use the name \emph{epistatic filtration} instead of \enquote{cluster filtration}. For an illustration of an epistatic filtration see Section \ref{subsec:extended_example}.

A threshold value $\theta$ is \emph{critical} if the cluster partition $\subdivision(\theta)$ differs from $\subdivision(\theta-\epsilon)$ for all $\epsilon>0$.
In this case, $\theta$ is necessarily the epistatic value of some dual edge, and $\subdivision(\theta)$ has strictly fewer clusters than $\subdivision(\theta-\epsilon)$.
However, the converse is not true: there may exist dual edges whose epistatic values are not critical.
An open interval $(\theta_0,\theta_1)$ with critical thresholds $\theta_0<\theta_1$ is \emph{regular} if it does not contain any critical value.

From a more geometric perspective we could also use dihedral angles instead of our epistatic weights.
This approach would yield a theoretical result similar to Theorem~\ref{thm:undo}.
Here, we refrain from doing so since in Section~\ref{sec:statsignificance1}, we will take statistical information into account.
There the height of a vertex is actually a mean value, 
and we do not see a natural way to extend the statistics to dihedral angles. 

\begin{remark}
  The \emph{tight span} of an arbitrary $n$-dimensional polyhedral subdivision $\subdivision$ (of some point configuration) is a CW-complex $\subdivision^*$ whose $0$-skeleton is formed by the maximal cells of $\subdivision$; cf.\ \cite{HerrmannJoswigSpeyer:2014}.
  The $k$-cells of $\subdivision^*$ correspond to those subsets of the maximal 
  cells of  $\subdivision^*$ which share a common cell of dimension $n-k$.
  In this way, the $1$-skeleton of $\subdivision^*$ agrees with the dual graph $\dualgraph(\subdivision)$.
  If the subdivision is regular, its tight span is a polyhedral complex.
 Assigning epistatic weights to all cells of $\subdivision^*$, not only to the edges, would open up a way to study more involved epistatic interactions.
\end{remark}

\begin{remark}
  The regular subdivision $\subdivision$ of any (rational) point configuration is dual to a tropical hypersurface \cite[\S3.1]{Tropical+Book}.
  In this way our epistatic filtrations, which are defined on the dual graph $\dualgraph(\subdivision)$, impose an additional structure on the $1$-skeleton of any tropical hypersurface.
\end{remark}

\begin{remark}
  The graph $\Gamma(\subdivision)$ equipped with the epistatic weights induces a finite metric space (on the facets of $\subdivision$), via taking shortest paths.
  Considering Vietoris--Rips filtrations then allow for studying the geometry of the lifted points by means of persistent homology \cite{Edelsbrunner+Harer:2010}.
  It could be interesting to investigate if there is any connection with the epistatic filtrations.
\end{remark}

Studying the ramifications into higher-dimensional epistatic weights, tropical geometry or persistent homology looks very promising, but all these topics are beyond the scope of this article.

\subsection{Computing epistatic filtrations}
\label{subsec:computing}

We now explain how to compute the epistatic filtration from the vertex set $V\subset\RR^n$ and the height function $h:V\mapsto\RR$ as input.
In the first step we need to determine the list of maximal cells of the regular subdivision $\subdivision=\subdivision(V,h)$; the standard encoding of each maximal cell is as a subset of $V$.
Determining $\subdivision$ is achieved by computing the convex hull of the lifted points in $\RR^{n+1}$ and selecting the facets with upward pointing normals.
Computing convex hulls is a standard problem in computational geometry \cite{HDCG:3:convex+hulls} with a somewhat delicate complexity status \cite[Open problem 26.3.4]{HDCG:3:convex+hulls};
see \cite{polymake:2017} for a recent survey from a practical point of view.

The input to the second step is the list of maximal cells of $\subdivision$ as subsets of~$V$; let $k$ denote their number.
If $V$ are the vertices of the $n$-cube then $k\leq n!$ by Lemma \ref{lem:complexity}.
The $k$ maximal cells form the nodes of the dual graph $\dualgraph(\subdivision)$.
To find the edges one can check the $\tbinom{k}{2}$ pairs of maximal cells, looking for those pairwise intersections which are maximal with respect to inclusion.
This yields the edges of $\dualgraph(\subdivision)$.
In the most relevant special case where $\subdivision$ is a triangulation, the maximal pairwise intersections are precisely those of cardinality $n$.
So the total cost for this step amounts to $\bigOh(k^2n)$.
Let $\ell$ denote the number of dual edges.
If $V$ are the vertices of the $n$-cube then $\ell\leq k(n+1)/2 - n(2^{n-1}-n+1)<(n+1)!$ by Lemma \ref{lem:complexity}.
Depending on the method used for the first step, this second step of finding the dual graph may not be necessary, since some convex hull algorithms produce it as a side product \cite{HDCG:3:convex+hulls,polymake:2017}.

The third step is to find the $\ell$ epistatic weights, each of which is gotten by computing three determinants of size at most $(n+2){\times}(n+2)$.
This adds up to a total cost of $\bigOh(\ell n^3)$.
Sorting the dual edges ascendingly by epistatic weight takes $\bigOh(\ell \log \ell)$.

In the fourth and final step we create the epistatic filtration as a rooted binary tree.
We iterate over the thresholds, which define cluster partitions.
On the way we maintain a forest where each tree represents one cluster in the corresponding partition.
Initially, each tree in the forest is an isolated node, one for each maximal cell of $\subdivision$.
For each dual edge $(s,t)$ with the next epistatic weight $\theta$ we merge clusters of $\subdivision(\theta-\epsilon)$ into clusters of $\subdivision(\theta)$.
Then we remove the trees $T$ and $T'$ containing the leaf nodes corresponding to $s$ and $t$ and add one tree with a new root and $T,T'$ as children. By convention, the left child should always be smaller than the right child with respect to some linear order. In our calculations, we use the lexicographic order on the underlying vertex sets.
The process ends with the dual edge of highest epistatic weight, and we obtain a rooted binary tree that is uniquely determined up to the choice of the order.
The running time of this step is linear in the number of nodes in the resulting tree.
Any binary tree has less than twice as many nodes as leaves; hence the cost adds up to $\bigOh(k)$.

Altogether we arrive at a complexity of
\[
  \bigOh(k^2 n + \ell n^3 + \ell \log \ell + k) \ = \ \bigOh(k^2 n + \ell n^3)
\]
for the steps two through four.
Note that $\ell \log \ell \leq k^2n$ in view of $\ell \leq kn$ by Lemma \ref{lem:complexity}.
The first step, which requires a convex hull computation, is the bottleneck.

We implemented our method in \polymake \cite{DMV:polymake,polymake:2017}, and this was used to produce all computational results presented in this article.




\subsection{An extended example} \label{subsec:extended_example}

To illustrate the concepts described in Section \ref{sec:clusters}, let $P$ be a 3-dimensional cube contained in $[0,1]^5$ with vertex set $V^{(3)}$  given by
\[
  \begin{array}{llll}
    0=0 0 0 0 0\ ; & 1=1 0 0 0 0\ ;& 5=0 0 0 0 1\ ; & 9=1 0 0 0 1\ ; \\
    4=0 0 0 1 0\ ;  &  8=1 0 0 1 0\ ; & 15=0 0 0 1 1\ ;  & 21=1 0 0 1 1 \enspace .
  \end{array}
\]
The vertex labels with their corresponding bit strings (e.g., 4=00010 means that vertex 4 corresponds to bit string 00010) are the genotypes listed in Table~\ref{tab:genotypes} and are viewed 
as points in $\mathbb R^5$.
Consider a height function $\text{\ttd}^{(3)}$ assigning the following values to the eight vertices of~$P$:
\begin{equation}\label{eq:ttd3}
  \begin{array}{llll}
    0 \mapsto 53.25\ ; & 1 \mapsto  46.65\ ; & 5 \mapsto  43.16\ ; & 9 \mapsto  43.48\ ; \\
    4 \mapsto  48.3\ ; & 8 \mapsto 47.79\ ; & 15 \mapsto  43.53\ ; & 21 \mapsto 40.71  \enspace .
  \end{array}
\end{equation}
Specifically, this height function $\text{\ttd}^{(3)}$, is given by restricting the time to death fitness landscape defined over the whole $[0,1]^5$
to the 3-cube with vertices as above. Details about this fitness landscape and others are given in Section \ref{sec:data}.

The induced triangulation 
$
\subdivision:= \subdivision(V^{(3)},\text{\ttd}^{(3)}) 
$
has six maximal cells:
\[
  \begin{array}{lll}
    A =\{\text{0 1 8 9}\} \ ;  & B = \{\text{0 5 9 15}\} \ ; & C = \{\text{0 8 9 15}\} \ ; \\
    D = \{\text{8 9 15 21}\} \ ; & E = \{\text{0 4 8 15}\}\enspace . & \phantom{C = \{\text{0 4 8 15}\}} 
  \end{array}
\]
A basic combinatorial invariant of an $n$-dimensional polyhedral complex is its \emph{$f$-vector} $f=(f_0,f_1,\dots,f_n)$, where $f_k$ is the number of $k$-dimensional cells.
Here we have $f(\subdivision)=(8, 18, 16, 5)$.
For the tight span, which agrees with the dual graph, we get $f(\subdivision^*)=(5, 4)$.

A direct computation shows that the normalized volume of the cell $D$ is two 
whereas all the other cells have normalized volume one.
This makes $ \subdivision$ a non-unimodular triangulation.
Its dual graph $\Gamma(\subdivision)$ is shown in Figure~\ref{fig:ttd_3proj154}.
For each edge in $\Gamma(\subdivision)$ we can compute the epistatic weight using the determinant expression from \eqref{eq:epistatic-weight}.
For instance, the $3$-simplices $C$ and $D$ are adjacent in $ \subdivision$, and we have
\[
  e_{\text{\ttd}^{(3)}}(C,D) \ = \ \bigl|\det E_{\text{\ttd}^{(3)}}(C,D)\bigr| \cdot \frac{\sqrt{3}}{2} \ \approx \ 0.113 \enspace .
\]
This computation reveals that $\text{\ttd}^{(3)}$ almost induces
a linear dependence among the lifted points indexed by vertices of $C$ and $D$, i.e. $\{(v,\text{\ttd}^{(3)}(v)) \vert v\in C\cup D \}.$
The computations in Example \ref{subsect:continuation_ex} 
also provide no evidence against 
the vanishing of $e_{\text{\ttd}^{(3)}}(C,D)$.
Therefore, we can assume that the additive assumption holds, thus 
knowing the fitness of the genotypes belonging to $C$ 
allows us to deduce the fitness of the unique (satellite) genotype of $D$ which is not in $C$.

A similar computation of the epistatic weights of the remaining dual edges yields the epistatic filtration in Table~\ref{fig:ttd_3proj154}.
The rows are sorted with increasing epistatic weight.

\begin{figure}[ht]
  \begingroup
  \newcommand\radius{2.5}
    \includegraphics[width=\textwidth]{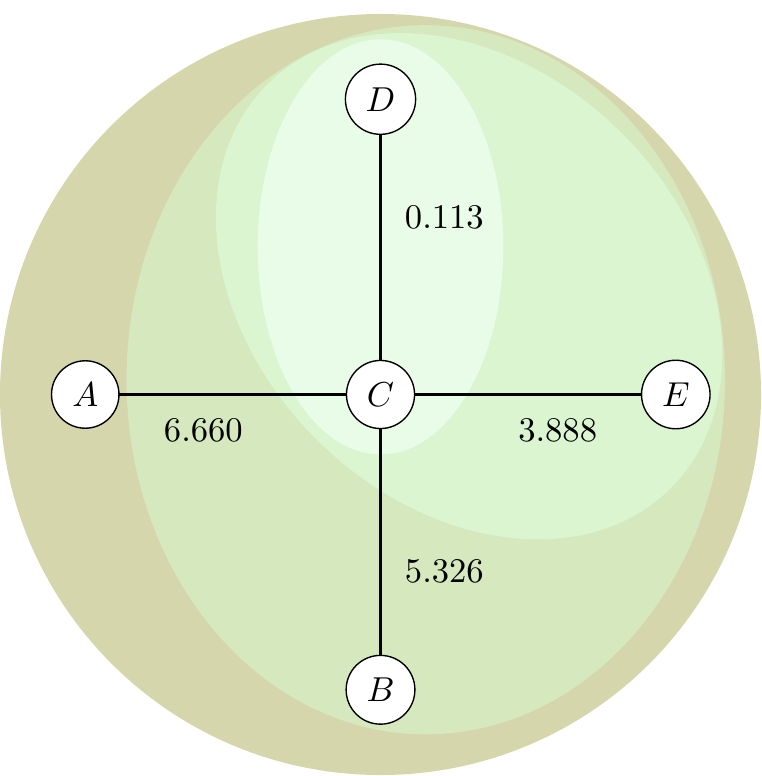} 
  \endgroup
  
  \caption{%
    Dual graph $\Gamma(\subdivision)$ of $ \subdivision$ induced by the restricted height function $\text{\ttd}^{(3)}$.
    Each dual edge is labeled with its epistatic value.
    The shading of the colors indicates the nesting of the cluster partitions.
  }
\label{fig:ttd_3proj154}
\end{figure}

\begin{table}[ht]
  \caption{Epistatic filtration arising from the restricted height function $\text{\ttd}^{(3)}$.
    The color of the circle to the right of each cluster partition agrees with Figure~\ref{fig:ttd_3proj154}.}
  \label{tab:filtrationttd_3proj}
  \begin{tabular*}{.67\linewidth}{@{\extracolsep{\fill}}lclr@{}}
    \toprule
    \multicolumn{1}{c}{$\theta$} & \multicolumn{1}{c}{$(s,t)$} & \multicolumn{1}{c}{$\subdivision(\theta)$} \\
    \midrule
    0  & $-$ & $A|B|C|D|E$ &  \\
    0.113 & $(C,D)$ & $A|B|CD|E$  & \tikz\draw [green0.065] ($(0,0)$) ellipse (0.15 cm and 0.15cm);  \\
    3.888 & $(C,E)$ & $A|B|CDE$  & \tikz\draw [green2.245] ($(0,0)$) ellipse (0.15 cm and 0.15cm);  \\
    5.326 & $(B,C)$ & $A|BCDE$ & \tikz\draw [green3.075] ($(0,0)$) ellipse (0.15 cm and 0.15cm); \\
    6.660 & $(A,C)$ & $ABCDE$ & \tikz\draw [green3.845] ($(0,0)$) ellipse (0.15 cm and 0.15cm); \\	
    \bottomrule
  \end{tabular*}
\end{table}
The initial cluster partition, for $\theta=0$, consists of five connected components, one for each maximal cell of~$\subdivision$:
\[
  \subdivision(0) \ = \ A|B|C|D|E \enspace ,
\]
From Table~\ref{tab:filtrationttd_3proj} we see that, among the four dual edges of $\Gamma(\subdivision)$, the dual edge $(C,D)$ is the one of lowest epistatic weight.
This is the second row of Table~\ref{tab:filtrationttd_3proj} and we have
\[
  \subdivision(0.113) \ = \ A|B|CD|E \enspace .
\]
After three more steps for $\theta=3.888$, 5.326, 6.660 we finally arrive at the trivial cluster partition
\[
  \subdivision(6.660) \ = \ ABCDE \enspace ,
\]
obtained from joining the adjacent simplices $A$ and~$C$.
In this triangulation all dual edges are critical.
That is, the cluster partitions arising from the epistatic weights of all dual edges are pairwise distinct.

\begin{figure}[ht]
  
\includegraphics[width=\textwidth]{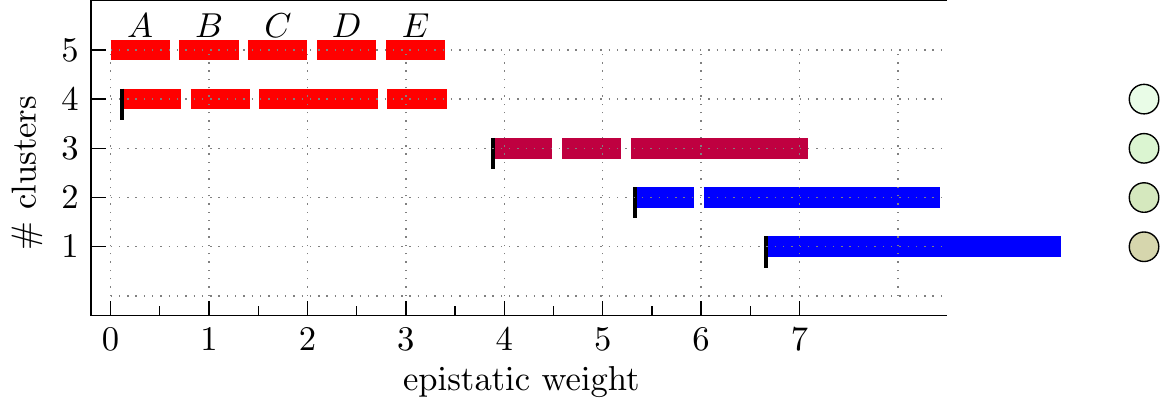}
  
  \caption{Visualizing the epistatic filtration from Table~\ref{tab:filtrationttd_3proj}.
    The black ticks mark the pairs $(\theta,\ell)$, where $\theta$ is a critical threshold, and $\ell$ is the corresponding level.
    The cluster partitions are drawn consistently through all levels.
    The color of the circle to the right of each cluster partition agrees with Figure~\ref{fig:ttd_3proj154}.
    The colors of the bars represent various levels of statistical significance 
    ($p<0.05$: \textcolor{blue}{blue}, $0.05\leq p<0.1$: \textcolor{purple}{purple}, $p\geq 0.1$: \textcolor{red}{red}); cf.\ Section~\ref{sec:statsignificance1}. 
  }
  \label{fig:bar-diagram}
\end{figure}

The epistatic filtration is the sequence of nested cluster partitions which arises from increasing the threshold values.
The whole process can be visualized as follows.
For each critical threshold $\theta$ we mark the point $(\theta,\ell)$ in a planar diagram, where $\ell$ is the \emph{level}, i.e., the number of clusters in the $\theta$-cluster partition.
To the right of the marking $(\theta,\ell)$ we draw the $\ell$ clusters as intervals such that the length of each interval is proportional to the size of the corresponding cluster.
Any two subsequent levels are consistently drawn in the following sense,
suppose that $\theta$ and $\theta'$ are two subsequent critical thresholds; i.e., the open interval $(\theta,\theta')$ is regular.
Let $\ell$ and $\ell'$ be the $\theta$- and $\theta'$-level, respectively.
Then the $\theta$-cluster partition refines the $\theta'$-cluster partition or, conversely, the $\theta'$ cluster partition arises from joining clusters.
That is, $\ell>\ell'$.
To see which clusters get joined one can compare the two sequences of intervals starting from the left (or from the right).
The length of each interval on level $\ell'$ indicates how many clusters of level $\ell$ get joined.
Such a consistent way of drawing an epistatic filtration always exists since the nested cluster partition of all levels form a tree.
By labeling the clusters on the top level, such a diagram encodes the entire epistatic filtration.




\section{Significant cluster partitions} \label{sec:statsignificance1}

The purpose of our epistatic filtrations is to help separate \enquote{biologically interesting} 
epistatic information from noise; a geometric view is expressed in Theorem~\ref{thm:undo}.
Our next goal is to detect cluster partitions which are significant in a statistical sense.
To do so, in Section \ref{subsec:significance_epistaticweight} we develop a 
hypothesis test for edges in the
dual graph $\Gamma(\subdivision).$

\subsection{Error analysis and standard deviation}
Let $P$ and $V$ be as before.
Throughout, for each $v\in V$ let $X_v$ be a positive (absolutely) continuous random variable.
Assume the first two moments of $X_v$ exist and are given by
$\expectation\big(X_v\big)$ and $\sigma_{X_v}=\sqrt{\expectation\big(X_v^2\big)-\big(\expectation(X_v)\big)^2}$. 
We view $X=(X_v)$, for $v\in V$, as a vector of random variables and 
assume that each realization of $X$ is a generic height function on $V$ with probability one.
Let $s=\conv\{v_1,\dots,v_{n+1}\}$ and $t=\conv\{v_2,\dots,v_{n+2}\}$ be two simplices which are spanned by points in $V$ and which share a common codimension-$1$-cell.
These are candidates for two adjacent maximal cells of $\subdivision(V,X)$; such cells are simplices almost surely, as $X$ is generic with probability one.
In addition, let $E_{X}(s,t)$ be the matrix from \eqref{eq:epistatic-matrix} with $h$ replaced by $X$.
We write $E_i=E_{X}(s,t)_{i}$ for the matrix obtained from $E_{X}(s,t)$ by deleting the $i$-th row and the last column.
Note that the coefficients of $E_i$ are ones or coordinates of vertices in $V$.
In particular, those coefficients do not depend on $X$ or any other entries of the last column of the matrix given in \eqref{eq:epistatic-matrix}.

\begin{remark}\label{rem:bipyramid}
  The epistatic weight was defined in the situation where the pair $(s,t)$ forms a dual edge of some subdivision.
  Yet the formula \eqref{eq:epistatic-weight} makes sense even without that assumption, i.e., for an arbitrary bipyramid.
\end{remark}

To simplify the exposition, in the following claim let $N:=\nvol(s\cap t)/(\nvol{s}\cdot \nvol{t})$ and 
let $i,j\in \{1, n+2 \}$.
Let $X\sim\normalDistribution(\mu,\sigma^2)$, then $Y=\vert X \vert\sim\mathcal{F}\normalDistribution(\mu,\sigma^2)$ is
again defined  by $\mu$ and $\sigma^2$ and is called a \emph{folded normal distribution}, 
see \cite{math2010012}.
The density function of $Y$ is given by:
\begin{equation}\label{formula:fnd}
f(y \, \vert \, \mu,\sigma^2)=\frac{1}{\sqrt{2\pi\sigma^2}}\big(e^{-\frac{1}{2\sigma^2}(y-\mu)^2}+e^{-\frac{1}{2\sigma^2}(y+\mu)^2}  \big)\enspace.
\end{equation}

\begin{proposition}\label{prop:se}
  We set \[ \lambda_i \ := \ (-1)^{n+i}\,  N \, \det(E_{i})\enspace .\]
  First, the expectation of the random variable $e_{X}(s,t)$ satisfies
  \[
    \bigg\lvert \ \sum_{i=1}^{n+2} \lambda_i \, \expectation\bigl(X_{v_{i}}\bigr)\bigg \rvert
    \ \leq \ \expectation \big(e_{X}(s,t) \big)
    \ \leq \ 
    \sum_{i=1}^{n+2}\big\lvert \lambda_i\big \rvert \, \expectation \bigl(X_{v_{i}}\bigr) \enspace,
  \]
  and its variance satisfies
  \begin{equation}\label{eq:se-ineq}
    \begin{split}
      \sigma^2_{e_{X}(s,t)}  \ \leq \ & \sum_{i=1}^{n+2} \left(N\, \det(E_{i})  \, \sigma_{X_{v_i}} \right)^2 \\
      & + \ 2\, N^2 \, \sum_{1\leq i < j \leq n+2} \left|\det(E_{i} \cdot E_{j})\right| \,\sigma_{X_{v_i}}\sigma_{X_{v_j}}\enspace .
    \end{split}
  \end{equation}
  Second, if the $n+2$ random variables $X_{v_i}$ are mutually independent then
  \begin{equation}\label{eq:se-indepcase}
    \sigma^2_{e_{X}(s,t)} \ \leq \ \sum_{i=1}^{n+2}\left(N \,\det(E_{i})  \,\sigma_{X_{v_i}} \right)^2 \enspace .
  \end{equation}
  Third, if in addition to independence the relation $X_{v_i}\sim\normalDistribution(\expectation (X_{v_i}),\sigma_{X_{v_i}}^2)$ holds for all~$i$, then  $e_{X}(s,t)$ has folded normal distribution 
  \begin{equation}\label{eq:se-norindepcase}
    \foldedNormalDistribution\left (\sum_{i=1}^{n+2} \lambda_i \, \expectation\big(X_{v_i}\big), \sum_{i=1}^{n+2} \bigl(\lambda_i \, \sigma_{X_{v_i}}\bigr)^2\ \right)
    \enspace.
  \end{equation}
\end{proposition}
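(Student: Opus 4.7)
The plan is to reduce everything to the observation that $e_X(s,t)$ is the absolute value of a single linear combination of the $X_{v_i}$, obtained by expanding the determinant of $E_X(s,t)$ along its last column. Concretely, Laplace expansion along the column containing the $X_{v_i}$ yields
\[
  N \cdot \det E_X(s,t) \ = \ \sum_{i=1}^{n+2} (-1)^{n+i}\, N\, \det(E_i)\, X_{v_i} \ = \ \sum_{i=1}^{n+2}\lambda_i\, X_{v_i}\,,
\]
so that $e_X(s,t) = \big|\sum_i \lambda_i X_{v_i}\big|$. With this identity in hand, all three parts of the proposition reduce to standard manipulations with linear combinations of random variables.

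For the expectation bounds, the lower bound is Jensen's inequality applied to the convex function $x\mapsto|x|$, giving $\big|\sum_i \lambda_i\, \expectation(X_{v_i})\big| = \big|\expectation(\sum_i \lambda_i X_{v_i})\big| \leq \expectation(e_X(s,t))$. The upper bound uses the triangle inequality pointwise together with the positivity of each $X_{v_i}$: $e_X(s,t) \leq \sum_i |\lambda_i|\, |X_{v_i}| = \sum_i |\lambda_i|\, X_{v_i}$, and taking expectations gives the claim.

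For the variance bound \eqref{eq:se-ineq}, I would first note that $\sigma^2_{|Y|} \leq \sigma^2_Y$ for any random variable $Y$, since $\expectation(Y^2)=\expectation(|Y|^2)$ while $(\expectation Y)^2 \leq (\expectation|Y|)^2$ by Jensen. Applied to $Y = \sum_i \lambda_i X_{v_i}$, bilinearity of variance gives
\[
  \sigma^2_{e_X(s,t)} \ \leq \ \sum_{i=1}^{n+2} \lambda_i^2\, \sigma^2_{X_{v_i}} + 2\sum_{1\leq i<j\leq n+2} \lambda_i \lambda_j\, \cov(X_{v_i}, X_{v_j})\,,
\]
and Cauchy--Schwarz $|\cov(X_{v_i}, X_{v_j})|\leq \sigma_{X_{v_i}}\sigma_{X_{v_j}}$ together with $|\lambda_i\lambda_j| = N^2 |\det(E_i)\det(E_j)| = N^2|\det(E_i\cdot E_j)|$ (multiplicativity of the determinant on the square matrices $E_i, E_j$) yields \eqref{eq:se-ineq}. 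The independent case \eqref{eq:se-indepcase} is immediate since all covariances then vanish.

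For the third part, when the $X_{v_i}$ are independent and normal, $Y=\sum_i\lambda_i X_{v_i}$ is a linear combination of independent normals, hence itself normally distributed with mean $\sum_i\lambda_i\expectation(X_{v_i})$ and variance $\sum_i\lambda_i^2\sigma^2_{X_{v_i}}$. Its absolute value is therefore folded normal with those same two parameters by the definition \eqref{formula:fnd}, proving \eqref{eq:se-norindepcase}. No step is genuinely hard; the only small subtlety is the inequality $\sigma^2_{|Y|}\leq \sigma^2_Y$, which is why the variance claim in part one is stated as an inequality rather than an equality, and which ceases to matter in part three where the exact distribution is identified.
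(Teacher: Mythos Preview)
Your proposal is correct and follows essentially the same line as the paper's proof: Laplace expansion along the last column to write $e_X(s,t)=\big|\sum_i\lambda_i X_{v_i}\big|$, Jensen for the lower expectation bound, triangle inequality plus positivity for the upper, bilinearity of covariance together with Jensen and Cauchy--Schwarz for the variance inequality, vanishing covariances under independence, and the standard fact that a linear combination of independent normals is normal (hence its absolute value is folded normal). Your write-up is in fact slightly more explicit than the paper's in isolating the step $\sigma^2_{|Y|}\le\sigma^2_Y$, which the paper subsumes under ``Jensen's inequality''.
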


\begin{proof} 
For a fixed ordering of the $n+2$ vertices in $s\cup t$, the numbers $\lambda_i$ are real constants. To deduce the first claim, use the Laplace expansion of $E_{X}(s,t)$ along the last column and the linearity of the expected value. The left inequality follows by Jensen's inequality $\vert \expectation (Y) \vert \leq \expectation(\vert Y\vert)$, \cite[Thm.\  B.17 in \S.B.2.2]{schervish1996theory}.
The right inequality follows from the triangle inequality and since $X_{v_i}$ are assumed to be positive.
The approximations of the variance of $e_{X}(s,t) $ in \eqref{eq:se-ineq} follow by the bilinearity of the covariance, Jensen's inequality and the Cauchy-Schwarz inequality \cite[Thm. B.19]{schervish1996theory}.

If all $X_{v_i}$ and $X_{v_j}$ are mutually independent, then $\cov\big(X_{v_i},X_{v_j}\big)=0$ 
in \eqref{eq:se-ineq}, and this  settles~\eqref{eq:se-indepcase} in our claim.

The last claim follows, e.g., from the convolution property of the density functions of $X_{v_i}$,
see \cite[\S. B.1.3]{schervish1996theory} or \cite[Prop.7.17]{dauxois2004toutes}.
Passing to the absolute value implies that the distribution of $e_{X}(s,t)$ is 
the folded normal distribution, defined by the claimed parameters. 
\end{proof}




\subsection{Significance test for the epistatic weight }\label{subsec:significance_epistaticweight}
Assume that the distribution mean $\mu=\expectation\big({e_{X}}(s,t)\big)$ 
is unknown. 
To test if $\mu$ is zero or not, we set up a one-sided test of significance. The null hypothesis is $\mu=0$, and the alternative hypothesis is $\mu> 0.$ 
To define the test statistics, for each $v\in V,$ consider a sample 
${\pmb x}(v)=\big(x_1(v),x_2(v),\dots,x_L(v)\big)$ 
of size $L$ of independent and equally distributed 
realizations of $X_v$.
The number $L$ may vary across $v\in V$.
Let $\bar{x}:V\to\RR$ be defined 
by the sample mean $\bar{x}_v$ at each $v\in V$.
Let $\bar{X}_v$ be the random variable 
evaluating to $\bar{x}.$ 
Since the sample size $L$ is large enough, we assume that $\bar{X}_v\sim\normalDistribution(\bar{x}_v, s_{\bar{x}_v})$, 
for each $v\in V$, and where $s_{\bar{x}_v}=\sqrt{\sum_{i=1}^L(x_i(v)-\bar{x}_v)^2}/\sqrt{L}.$
Let $s$ and $t$ be two adjacent simplices of the triangulation induced by $\bar{x}$.
We define the test statistics for the dual edge $(s,t)$ to be $z=e_{\bar{x}}(s,t)$.
Assuming pairwise independence of $X_v$ for $v\in V$, we deduce that the random variable $Z$ evaluating to $z$ satisfies
$Z\sim\foldedNormalDistribution(0,\sigma_{e_{\bar{X}}(s,t)}^2)$ under the null assumption.

The validity of the null hypothesis is then deduced by computing the $p$-value of the test:
\begin{equation}\label{formula:p}
  P(Z\geq z) \ = \ \int_{z}^{\infty} \frac{\sqrt{2}}{\sigma_{e_{\bar{X}}(s,t)}\sqrt{\pi}} e^{-\frac{1}{2}\big(\frac{y}{\sigma_{e_{\bar{X}}(s,t)}}\big)^2}dy \enspace ,
\end{equation}
where the integrand is the density function given in \eqref{formula:fnd} and $\sigma_{e_{\bar{X}}(s,t)}$
is estimated as in Prop. \ref{prop:se}.
We call the dual edge $(s,t)$ statistically significant if its $p$-value fulfills $p<0.05$.
In this case the null hypothesis can be rejected.
Setting the significance level at $0.05$ is a common choice; cf.\ \cite{MR2723410}.

Naturally higher epistatic weights are more likely to be significant.
However, this does not always have to be the case as also the 
standard deviation of $e_{\bar{X}}(s,t)$ is taken into consideration in this test.

\begin{remark} 
Our assumptions on $Z$ are plausible for the data analyzed in this paper, see Section \ref{sec:data}.
At the same time, these assumptions are permissive in terms of significance.
Moreover, computing epistatic weights can be of interest, if the lifted point configuration is given by mutually independent random variables, as well as correlated random variables.
\end{remark}




\subsection{Clusters from significant epistatic weights}
\label{subsec:significance_clusters}

We now explain how the notion of significant epistatic weights makes 
the cluster filtration process discussed in Section \ref{subsec:clusters} into a
biologically meaningful clustering algorithm.
For this let $\subdivision(V,\bar{X})$ be as above an induced regular triangulation of an $n$-polytope $P$ equipped with a height function $\bar{X}$.
Again we assume that $\bar{X}$ assigns to each vertex of $P$ the sample mean over a number of experimental measurements. 

We saw that varying the parameter $\theta$ partitions $\subdivision$ into clusters ordered according to their epistatic weight.
This can be used to discard noisy signal from relevant data. 
However, that approach does not take into account the dispersion of the experimental measurements.
To account for this, we propose to combine the epistatic filtrations with the significance test discussed above.
More precisely, we suggest to compute all epistatic weights for the dual graph of $\subdivision$, 
contract the edges whose epistatic weight does not reach significance at $p<0.05$ and unify the labels of the affected vertices. 
We call the remaining graph the \emph{significant subgraph} $\significantsubgraph(\subdivision)$ of~$\dualgraph(\subdivision)$.
This induces \emph{significant clusters} and the \emph{significant cluster partition}~$\significantclusterpartition$.
Now the epistatic filtration process from Section~\ref{subsec:clusters} and the algorithm from Section~\ref{subsec:computing} carry over.



\subsection{Continuation of the extended example}\label{subsect:continuation_ex}

We now illustrate the above definitions on the example in \ref{subsec:extended_example}.
Consider again the regular triangulation $\subdivision=\subdivision(V^{(3)},\text{\ttd}^{(3)})$ induced by the height function $\text{\ttd}^{(3)}$ from \eqref{eq:ttd3}.
Figure~\ref{fig:ttd_3proj154} shows the dual graph of $\subdivision$; the five maximal cells are labeled $A,B,C,D,E$.
The value for each vertex is the sample mean over a large number of outcomes of replicated experiments.
Therefore we now write $\bar{X}$ instead of $\text{\ttd}^{(3)}$.
The standard error of the mean for the eight vertices is given by:
 \[
 \begin{array}{llll}
    0 \mapsto 1.450  & 1 \mapsto 1.498 & 5 \mapsto 1.136  & 9 \mapsto  0.988 \\
    4 \mapsto  1.010 & 8 \mapsto 1.098 & 15 \mapsto 1.156  & 21 \mapsto  0.907 \enspace .
 \end{array}
\]
Assuming independence, for each dual edge in $\Gamma(\subdivision)$ we now compute bounds on the associated standard deviation using \eqref{eq:se-indepcase}.
For instance, for $e_{\bar{X}}(C,D)$ consider
\[
  E_{\sigma_{\bar{X}}}(C,D) \ = \
  \begin{pmatrix}
    1 & 0 & 0 &0 & 1.450 \\
    1 & 0 & 1 &1 & 1.098 \\
    1 & 1 & 0 &1 & 0.988 \\
    1 & 1 & 1 &0 & 1.156 \\
    1 & 1 & 1 &1 & 0.907
  \end{pmatrix} \enspace .
\]
This yields
\[
  \begin{split}
    \sigma_{e_{\bar{X}}}&(C,D)\ \leq \ \left(\sum_{i=1}^{5}\left(N \,\det(E_{\sigma_{\bar{X}}}(C,D)_{i}) \,\sigma_{\bar{X}}(v_i) \right)^2\right)^{1/2}\ \approx\ 2.586 
  \end{split}
\]
with $N:= \nvol(C\cap D)/\nvol(C)\cdot \nvol(D)$. Processing the other epistatic weights in a similar fashion yields the values in Table \ref{tab:sdep}.
The rows are sorted by increasing epistatic weight.
The bounds on the $p$-values in the last column are determined by \eqref{formula:p}.

\begin{table}[ht]
  \caption{Significance of epistatic weights. The dual edge in bold is asserted to reach significance for $p<0.05$.}
  \label{tab:sdep}
  \begin{tabular*}{.67\linewidth}{@{\extracolsep{\fill}}lccrl@{}}
    \toprule
    $(s,t)$   & $e_{\bar{X}}(s,t) $ & $\sigma_{e_{\bar{X}}}(s,t)\leq$ & $p\text{-value}\leq$  \\
    \midrule
    $(C,D)$ & 0.113 & 2.586 &  0.965 & \tikz\draw [green0.065] ($(0,0)$) ellipse (0.15 cm and 0.15cm);     \\
    $(C,E)$ & 3.888 & 2.698 &  0.149  & \tikz\draw [green2.245] ($(0,0)$) ellipse (0.15 cm and 0.15cm);   \\
    $(B,C)$ & 5.326 & 2.844 & 0.061  & \tikz\draw [green3.075] ($(0,0)$) ellipse (0.15 cm and 0.15cm);   \\
    $\boldsymbol{(A,C)}$  & 6.660 & 3.309 & 0.044 & \tikz\draw [green3.845] ($(0,0)$) ellipse (0.15 cm and 0.15cm);    \\
    \bottomrule
  \end{tabular*}
\end{table}

In this example, the significant cluster partition $\significantclusterpartition$ 
arising from the restricted height function $\bar{X}$ reads $A|BCDE$.  

\begin{remark}
  Since \eqref{eq:se-indepcase} only provides an upper bound on the $p$-value, it is useful to also investigate dual edges and epistatic weights whose $p$-value bounds are near $0.05$.
  In this way, the dual edge $(B,C)$ with epistatic weight $5.326$ and $p$-value bound $0.061$ comes into focus.
  It would be interesting to check if additional experiments involving the five genotypes $0,5,8,9,15$ in $B\cup C$ lead to a higher level of significance or not.
\end{remark}

\subsection{A synthetic experiment}
\label{subsec:synthetic}


In this section we describe one synthetic experiment on a quantitative analysis of Theorem~\ref{thm:undo} in relationship with the concept of statistical significance from Section~\ref{sec:statsignificance1}. The purpose is to evaluate the extent to which changes in the height function for a single vertex impact the overall epistatic filtration. 

We consider the vertex set $V=\{0,1\}^5$ of the regular $5$-cube and a height function $\eta$ which takes every vertex to height $5$ except for the wild type, which is mapped to $5+\eta_0$ 
for some strictly positive real number $\eta_0$.
The wild type corresponds to the vertex $0$; cf.\ Table~\ref{tab:genotypes}.
The combinatorial type of the regular subdivision $\subdivision(V,\eta)$ does not depend on the precise value $\eta_0>0$.
In fact, there are precisely two maximal cells, $s_0$ and $t$, and $\subdivision(V,\eta)$ is a \emph{vertex split} in the terminology of \cite{HerrmannJoswig:2008}.
The cell $s_0$ is a simplex spanned by the wild type and its five neighbors (i.e., the standard simplex with the origin and the five unit vectors as its vertices).
The other cell, $t$, is not a simplex; instead this is the convex hull of all 31 vertices different from the wild type.
The intersection of $s_0$ and $t$ is the regular $4$-simplex spanned by the five unit vectors. 
So the dual graph $\dualgraph(\subdivision(V,\eta))$ has two nodes connected by the single dual edge $(s_0,t)$.

Our experiments depend on the choice of $\eta_0$ and a second strictly positive real number $\sigma$.
To each vertex $v\in V$ we assign a normally distributed random variable $X_v$ with zero mean and standard deviation $\sigma$.
From $100$ realizations per vertex we compute the resulting sample means and standard errors.
This gives rise to a generic perturbation
\[
  \eta' \ = \ \eta+(\bar{X}_v\,|\,v\in V)
\]
of the height function $\eta$ by adding the sample means.
For the resulting triangulation $\subdivision(V,\eta')$ we compute the epistatic weights and the $p$-values (based on the standard errors computed).
These are all the ingredients required for the significance test via \eqref{eq:se-indepcase}.
We start with a height function $\eta$ at level $5$ since the mean values $\bar{X}_v$ from the perturbation may be negative; note that the perturbed height function $\eta'$ needs to be strictly positive in order to qualify for the analysis via the $p$-values from \eqref{formula:p}.

We only consider perturbed height functions $\eta'$ such that the simplex $s_0$ is a maximal cell of $\subdivision(V,\eta')$, just as in $\subdivision(V,\eta)$.
For $\sigma$ sufficiently small compared to $\eta_0$ this holds almost always.
If $s_0$ is a maximal cell then $s_0$ is adjacent to some unique maximal cell of $\subdivision(V,\eta')$.
We call the corresponding dual edge the \emph{bridge} of $\dualgraph(\subdivision(V,\eta'))$.

Now, for a fixed pair $(\eta_0,\sigma)$ we repeat the above random construction $100$ times, and we count how often the bridge is significant with respect to $p=0.05$ and $p=0.1$.
In all the cases that we saw the simplex $s_0$ was a maximal cell, and the bridge existed.
Further all perturbed height functions $\eta'$ were nonnegative.
Figure~\ref{fig:synth_experiment} shows the result for 
$\eta_0\in\{0.8, 1.0, 1.2\}$ and $0.1 \leq \sigma\leq 2$.

\begin{figure}[ht]

  \centering
  \begingroup
\includegraphics[width=\textwidth]{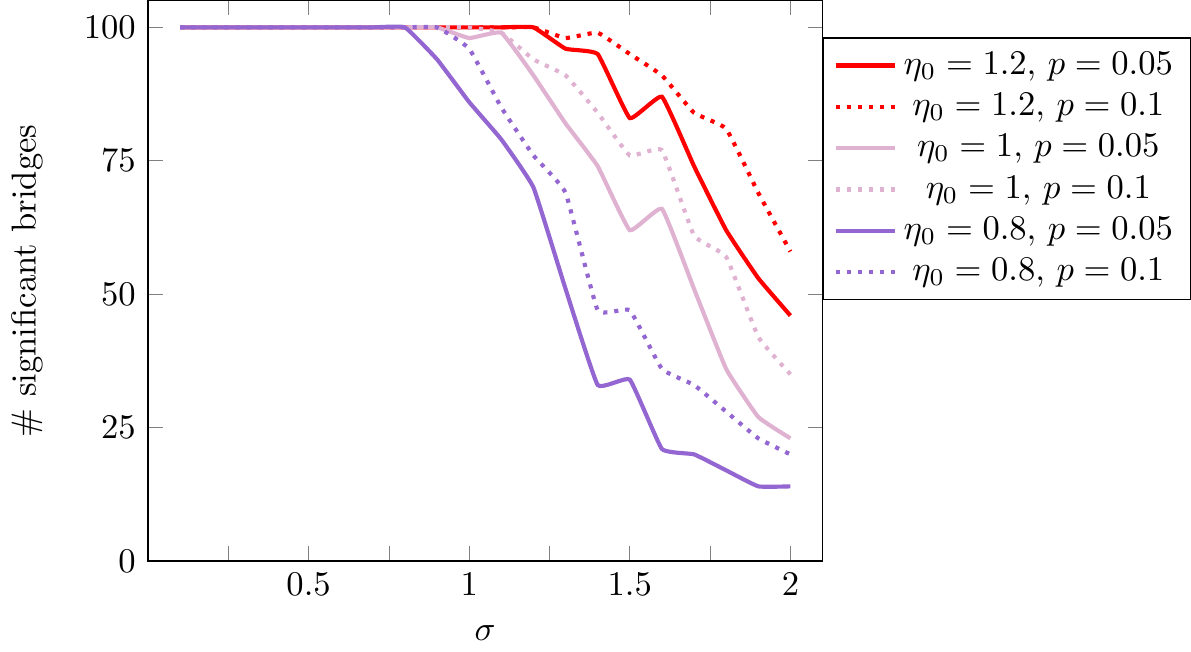}
  \endgroup
  \caption{Percentage of significant bridges depending on $\sigma$, for various choices of $\eta_0$ and $p$.}
  \label{fig:synth_experiment}  
 
\end{figure}

The experimental results displayed in Figure~\ref{fig:synth_experiment} can be summarized as follows:
for any fixed choice of $\eta_0$ and $p$ the percentage of triangulations with the bridge present approximates a threshold function in the parameter~$\sigma$.
If $\sigma$ is sufficiently low then the bridge is always significant; this observation can be seen as a variation of Theorem~\ref{thm:undo} for this particular setup.
The lower the value of $\eta_0$, the steeper the corresponding curve in Figure \ref{fig:synth_experiment}.
For larger values of $\sigma$ random fluctuations kick in, and this makes the curve less smooth.

This experiment provides evidence that our concept of significant cluster partitions is suitable to weed out small statistical fluctuations in the height function.




\section{Epistasis, interaction coordinates and circuit interactions}\label{sec:epistasis}

In this section we compare the cluster partition to previous approaches. To do so, we now recall the biological phenomenon of epistasis as described in the work of Beerenwinkel et al.~\cite{BPS:2007}. 

\subsection{Interaction spaces}\label{sec:int}
Let $P$ be any $n$-dimensional convex polytope with vertex set $V$. Let $\RR^{V}$ be 
the real vector space of all height functions on $V$.
Let $\mathcal{L}_{V}$ be the subspace of $\RR^{V}$ consisting of 
all height functions on $V$ for which the lifted polytope has dimension $n.$
The \emph{interaction space} is the quotient
\[
\mathcal{I}_{V} \ := \ \Bigg(\RR ^{V}/\mathcal{L}_{V} \Bigg)^\ast \enspace.
\] 
Elements of $\mathcal{I}_{V}$ are linear forms
\begin{align*}
\lambda\colon\RR ^{V}&\longrightarrow\RR\\
h&\longmapsto \sum_{v\in V}\alpha_{v}h(v) \enspace ,
\end{align*} 
with vanishing restriction $\lambda\vert_{\mathcal{L}_{V}}$. In the following, we call
elements of  $\mathcal{I}_{V}$ \emph{interactions}. The dimension of the interaction space is
$
\mathrm{dim}(\mathcal{I}_{V})=\mathrm{dim}(\RR^V)- \mathrm{dim}(\mathcal{L}_{V})=\vert V \vert - \mathrm{dim}(P)-1
$.

When $V$ is the vertex set of an $n$-cube, a 
basis for $\mathcal{I}_{V}$ is given by the \emph{interaction coordinates} 
defined up to multiplication by a scalar as:
 \[
   \begin{split}
u_{h,w}:\RR ^{V}&\longrightarrow\RR\\
h&\longmapsto\sum_{v\in V}
(-1)^{ \langle v , w \rangle}h(v) \enspace ,
\label{eq:int}
   \end{split}
 \]
where $v,w\in\{0,1\}^n$ are vertices of $P$ and $w$ is assumed to have at least two coordinates being 1.

When $n=2,$ and a height function is fixed, then $u_{h,11}=\epsilon(00,01,10,11)$ as defined in Example \ref{ex:running_part2}.
A possible generalization of the usual epistasis formula arises by considering
\emph{circuit interactions}, as defined in \cite{BPS:2007}. 
These are linear forms contained in the interaction space with support given by
a minimal affinely dependent set of vertices of $P$. 
Notice that if the height function $h$ induces an affine function on $s\cup t$, then the 
lift of a set of affinely dependent points is also affinely dependent.
Some examples of circuit interactions and possible epistatic interpretations are given
in \cite[Example 3.8, 3.9]{BPS:2007}.

\subsection{Epistatic weight interactions}\label{sect:epint}

In this work, a new distinguished set of interactions inside $\mathcal{I}_{V}$ are given by the linear forms:
 \[
   \begin{split}
     \RR ^{V}&\longrightarrow\RR\\
     h&\longmapsto e_{h}(s,t) \enspace ,
   \end{split}
 \]
where $s$ and $t$ are adjacent $n$-simplices with vertex set in $V$.
When $h$ is generic, $s$ and $t$ are given as maximal adjacent simplices 
in the induced regular triangulation $\subdivision(V,h)$.

When $V=\{0,1\}^2$, there is a unique epistatic weight which agrees with the absolute value of the linear form $u_{h,11}$. 
This is the only case where the epistatic weight agrees with the notion
of interaction coordinates.
Now we will summarize the relation to circuit interactions.

Let $v_1,v_2,\dots,v_{n+2}$ be vertices in $V$ such that $s=\conv\{v_1,\dots,v_{n+1}\}$ and $t=\conv\{v_2,\dots,v_{n+2}\}$ are two $n$-simplices.
Then the $n+2$ vertices in $s\cup t$ are affinely dependent and contain a unique \emph{circuit}, i.e., a minimal affinely dependent set; cf.\ \cite[\S2.4.1]{Triangulations}. Furthermore,  $e_{h}(s,t)\in \mathcal{I}_{V}$ is a circuit interaction in the sense of \cite{BPS:2007}.




\section{Epistasis in \emph{Drosophila melanogaster} fruit fly microbiomes}

In this section, we use the above approach and analyze existing \emph{Drosophila} 
microbiome data. In particular, we demonstrate similarities with and differences from the methods used in \cite{BPS:2007,BMC:2007}
and locate interesting epistatic information where the previous approach is less conclusive.

\subsection{Data}\label{sec:data}

The data we use is published in Tables S1 in \cite[p.30, Supplemental Material(SM)]{Gould232959}. It consists of experimental measurements of \emph{Drosophila} flies inoculated with 
all possible combinations of five bacterial species naturally present in the gut of wild flies. This dataset is remarkably complete as all 32 bacterial combinations are considered.

The data set includes measurements of time to death (days), daily fecundity (progeny/day/female) and development time (days). 
All measurements were repeated many times to give a mean and standard error for each measurement and bacterial combination. More details on the replications of measurements and experimental settings can be found in the Materials and Methods Section of \cite[SM]{Gould232959}.
In this work, we consistently referred to the above fitness landscapes by the labels  {\ttd}, {\fec} and {\dev}. 
When restricting one of these fitness landscapes
to smaller sub-genotopes, we add a superscription, as in Example \ref{subsec:extended_example}.

\subsection{Epistatic weight approach}

With the approach developed in this paper one asks for general epistatic information in genotype--phenotype mappings. 
Contrary to previous studies, epistasis here is understood as a general deviation from additivity rather than a specific manifestation of it, 
quantified for instance by marginal, conditional, $2$-, $3$-, $4$- or $5$-way interactions, see \cite{BPS:2007} for the terminology.

In the specific \emph{Drosophila} data set, the methods developed in this work allowed us to distinguish certain bacterial combinations with vanishing epistatic weights from statistically significant ones. Bacterial combinations with low epistatic weights are not expected to have synergistic or antagonistic interactions between the species, while such effects are expected for bacterial combinations with statistically significant epistatic weight. Filtrations then provide clusters of bacterial combinations, based on adjacent relationships between simplices in a triangulation. Clusters thus determine interesting regions inside the genotope, which we propose should be the targets for further analysis. An example of such an analysis is given in Section \ref{sect:entire_cube}.

Our method intentionally involves a relatively small number of tests, limited by the adjacency relations among maximal simplices in the triangulated genotope and are specific to each phenotype mapping. In Section \ref{sec:statsignificance1} we provided
statistical tools for the data analysis.
These tools also facilitate comparisons with previous approaches. 
In \cite{Gould232959} a number of specific epistatic formulas were
classified as significant. 
These epistatic formulas include standard tests, contextual tests,  interaction coordinates and circuits, as described in \cite[\S 5, Math Supplement, SM, pp.62]{Gould232959}. Significance was 
tested as described in \cite[\S 6, Math Supplement, pp.69 SM]{Gould232959}. The results of
these tests and their significance was reported in Figure 4 A-E, \cite[Main text]{Gould232959}.

Comparing our work with the results of these tests reveals important differences between the epistatic formulas of \cite{Gould232959} and the epistatic filtrations examined here. Examples are given below.

\subsection{The case of the entire $[0,1]^5$}\label{sect:entire_cube}

Filtrations for the fitness landscapes of $[0,1]^5$ defined for the time to death (\ttd), daily fecundity (\fec) and development time (\dev) provide insights on the different cluster patterns
arising but revealed a unique significant epistatic weights. 
The significant dual edge in {\ttd} is given by $e_\ttd(s,t)\approx 5.435$
with a $p$-value of approximatively 0.0376.
This epistatic weight arises over the bipyramid with vertex set
$s=\{0, 9, 12, 14, 15, 28\}$ and
$t=\{0, 9, 12, 14, 27, 28\}$.

This fact has two biological implications. First, it shows
that we have evidence against one 5-dimensional affine
relation between the {\ttd} measurements for $15$ and $27$, 
given the {\ttd} measurements for $s\cup t\backslash\{15,27\}$. 

Second, we observe that
$s\cup t=\{ 0,9,12,14,15, 27, 28\}$
is not a minimal affinely dependent set. Yet, the subset \{14,15,27,28\} is a minimal affinely dependent set, which gives rise to the linear form
\begin{equation} \label{eq:circuit_pq}
\omega_h =\  \big(h(00101) - h(00011)\big) - \big(h(11101) - h(11011)\big) \enspace.
\end{equation} 
This linear form cannot be written as a 3-cube circuit interaction, and allows
for a new biological interpretation.
More specifically, the linear form of equation \eqref{eq:circuit_pq}, 
compares the effect of two pairs of \emph{Acetobacter} 
bacteria (\emph{pasteurianus} with \emph{orientalis}, 
resp.\ \emph{tropicalis} with \emph{orientalis}) 
in the joint presence, resp.\ absence, of both
\emph{Lactobacillus} bacteria. 
Evaluating equation \eqref{eq:circuit_pq} at \text{ttd} gives 
$\vert {\omega_\ttd}\vert=\frac{1}{2}\vert\det E_{\text{ttd}}(s,t)\vert \approx 4.86$ with a
$p$-value of approximatively $0.038$. Thus, there is evidence to believe that
the above form of marginal epistasis is non-additive. This fact refines our first conclusion.
By the discussion in Section \ref{sect:epint}, there is no
other circuit interaction on $s\cup t$.

Significant outcomes for other linear forms on $[0,1]^5$
studied in the context of epistasis are also possible.
For instance, results of recent work imply that for {\ttd}, 124 tests out of 936 resulted to be significantly different than zero, ($p< 0.05$), see \cite[\S 5, Math Supplement, SM]{Gould232959}. These 936 tests include: epistatic weights on all 2-faces in $[0,1]^5$, all 20 circuit interactions, $a$-$\ell$ in \cite{BPS:2007}, for all 3-faces in $[0,1]^5$, and all interaction coordinates for the $k$-faces of $[0,1]^5$, for $k\in\{3,4,5\}$, defined as in Section \ref{sec:int}. 

\begin{remark} In \cite[\S 6, Math Supplement, pp.69 SM]{Gould232959} the discovery rate was corrected by Benjamini--Hochberg
multiple testing correction method \cite{10.2307/2674075}. Correction methods of this type 
aim at decreasing the number of significant outcomes by varying the $p$-values of the tests.
They are typically used when a large number of tests are made, 
making false discoveries more likely.
Due to the fact that there are very few significant epistatic 
weights, here we refrained from applying similar correction methods.
\end{remark}

\subsection{The case of parallel facets inside $[0,1]^5$}
\label{subsec:parallel-facets}

Consider the interaction coordinate:
\begin{equation}\label{eq:u4sign}
  \begin{split}
    u_{h,0\ast101}&\ =\ h({0\ast000})+h({0\ast010})+h({1\ast000})+h({0\ast101})\\ 
    &\ +\ h({1\ast010})+h({0\ast111})+h({1\ast101})+h({1\ast111})\\
    &\ - \ h({0\ast001})-h({0\ast100})-h({0\ast011})-h({1\ast001})\\
    &\ - \ h({1\ast100})-h({0\ast110})-h({1\ast011})-h({1\ast110}) \enspace,
  \end{split}
\end{equation}
where $\ast\in\{0,1\}$.  
The two values of $\ast$ determine so called \emph{four-way interactions}, \cite{BPS:2007},
on parallel facets of $[0,1]^5$.
If $\ast=0,$ the above interaction is considered in the 
absence of the \emph{Lactobacillus brevis} bacteria, 
otherwise the bacteria are present.
The computations of \cite[\S 5, Math Supplement, SM, pp.62]{Gould232959} determine that 
$u_{h,0\ast101}$ is simultaneously significant 
for $h$ given by {\fec}, {\dev} and {\ttd}, 
only in the absence of the \emph{Lactobacillus brevis} bacteria.

\input{diagrams_distinguished_fourfaces}

To further inspect the effect of inoculating 
\emph{Lactobacillus brevis} bacteria in \emph{Drosophilas},
we compute the fitness landscape
defined by the vertices in the summands of $u_{h,0\ast101}$ for both
values of $\ast$ and for $h$ given by {\fec}, {\dev} and {\ttd}. Results are shown in Figure \ref{fig:distinguished}.
For {\ttd} and {\fec}, different cluster patterns appear on the parallel facets. 
The presence of significant epistatic weights in the absence of the \emph{Lactobacillus brevis} 
bacteria confirms that these bacteria significantly affect epistatic interactions.  
Further dissecting the significant epistatic weights, as above, as well as the
filtration steps, restricts the set of possible bacterial 
combinations responsible for this effect.
For {\dev}, all epistatic weights are near zero.
As a consequence, this phenotype produced no 
significant dual edges (all bars are red in Figure 5). We conclude
that the epistatic filtration effect of \emph{Lactobacillus brevis} 
on the \emph{Drosophila} microbiome is most pronounced 
for the phenotypes of {\fec} and {\ttd}.

This result is in contrast to the local significance tests 
computed in \cite[\S 5, Math Supplement, SM, pp.62]{Gould232959}.
Out of those tests, 15 interactions can be seen to be simultaneously
significant for {\fec} and {\dev}. Thus, the two approaches to discover epistatic 
interactions reveal different biological insights, and it remains for future empirical investigations to determine which approach best captures the underlying biological phenomena. 

\subsection{The case of three-cubes inside $[0,1]^5$}\label{sub:cube}

\begin{table}[bh]
  \caption{Circuit interactions and interaction 
coordinates for $\text{\ttd}^{(3)}$ on the 3-cube of Example
   \ref{subsec:extended_example}
   which reached significance for $p<0.05$.}
  \label{tab:scirc}
  \begin{tabular*}{.87\linewidth}{@{\extracolsep{\fill}}lrccl@{}}
    \toprule
    Interaction   &$\vert Z \vert$& $\sigma_{Z}^2$ & $p\text{-value}$ folded & Vertices\\
    \midrule
$a$     &	$6.09$	&	2.57		&0.02 & 0 1 4 8\\
$c$      &	$6.92$ &	2.58		&0.01 & 0 1 5 9\\
$e$	&	$5.32$	&	2.40		&0.03 & 0 4 5 15\\	
$m$	&    $9.10$	&	3.72		&0.01 & 0 1 4 5 21\\
$\varsigma$	&	$7.69$	&	3.83		&0.04 & 0 1 8 9 15\\
$u_{111}$	& 	$9.23$	&	3.32		&0.01 & 0 1 5 9 4 8 15 21\\
    \bottomrule
  \end{tabular*}
\end{table}

To make the comparison between the two methods more explicit, 
consider again Example \ref{subsec:extended_example}.
As before, let $\text{\ttd}^{(3)}$ denote the time to death fitness landscape 
restricted to the genotypes defining the 3-dimensional cube
in $[0,1]^5$ with vertex set  $V^{(3)}=\{0,1,4,5,8,9,15,21\}$.
On the one hand, following \cite{BPS:2007} we know that 
there are 20 circuit interactions of interest and four interaction 
coordinates for such a 3-cube.
Out of these 24 tests, 
six reached statistical significance for $p<0.05$ and assuming that the test statistic satisfies 
$\vert Z \vert \sim\foldedNormalDistribution(\mu,\sigma^2_Z)$.
These significant tests are reported in Table \ref{sub:cube}.
Using the terminology of \cite{BPS:2007},
these significant tests capture three forms of conditional epistasis (for $a,c,e$), 
an interaction coordinate ($u_{111}$, the three-way interaction) 
and the circuit interactions ($m,\varsigma$, two bipyramids in the 3-cube).

On the other hand, out of the four epistatic weights for 
$\subdivision(V^{(3)},\ttd^{(3)})$, computed
in Example \ref{subsec:extended_example}, only $e_{\ttd^{(3)}}(A,C)$ 
reached statistical significance; cf.\ Section \ref{subsect:continuation_ex}.
As before, we have
\[
\varsigma_{\text{\ttd}^{(3)}}\ = \ \vert \det E_{\ttd^{(3)}}(A,C)\vert\enspace.
\]

\subsection{Parallel epistatic weights}

In Section~\ref{subsec:parallel-facets} we studied epistatic effects 
arising from the presence or the absence of one type of bacteria. 
Here we discuss a different way to address the same.

The presence or the absence of the $k$th type of bacteria defines one pair of parallel facets, which induce a partition on the full vertex set.
We denote these facets as $F={\ast}\cdots{\ast}{0}{\ast}\cdots{\ast}$ and $F'={\ast}\cdots{\ast}{1}{\ast}\cdots{\ast}$, respectively; the $0$ and $1$ are in the $k$th position.
The map, $\phi$, which sends a vertex $v=(v_1v_2\dots v_n)\in\{0,1\}^n$ to $v'=(v_1'v_2'\dots v_n')$ where
\[
  v_i' \ = \ \begin{cases} 1-v_k & \text{if $i=k$}\\ v_i & \text{otherwise} \end{cases}
\]
is a reflection which exchanges $F$ and $F'$.
Restricting any generic height function on the entire cube $[0,1]^n$ to the two facets induces two triangulations, of $F$ and $F$', respectively.
Now we can compare the epistatic weights of the bipyramids arising from the triangulation of $F$ with the epistatic weights of their images in $F'$ under the reflection map $\phi$.
We call these \emph{parallel epistatic weights}.
Note that the bipyramids in $F'$, in general, are not bipyramids of the triangulation induced on $F'$.
Still we can compute their epistatic weights, assess their statistical significance and compare; cf.\ Remark~\ref{rem:bipyramid}.
Note that this also applies to any face of the $n$-cube, as faces of cubes are cubes.

\begin{example}
  In Example \ref{subsec:extended_example}, we investigated the epistatic filtration arising from restricting the height function {\ttd} to the $3$-face ${\ast}{0}{0}{\ast}{\ast}$.
  Now we can view ${\ast}{0}{0}{\ast}{\ast}$ as a facet of the $4$-cube ${\ast}{\ast}{0}{\ast}{\ast}$, and ${\ast}{1}{0}{\ast}{\ast}$ is the parallel $3$-face.
  Its vertices are
  \[
    \begin{array}{llll}
      2=0 1 0 0 0\ ; & 6=1 1 0 0 0\ ;& 12=0 1 0 0 1\ ; & 18=1 1 0 0 1\ ; \\
      11=0 1 0 1 0\ ;  &  17=1 1 0 1 0\ ; & 24=0 1 0 1 1\ ;  & 28=1 1 0 1 1 \enspace .
    \end{array}
  \]
  Restricting the height function {\ttd} yields
  \begin{equation}\label{eq:ttd3_p}
    \begin{array}{llll}
      2 \mapsto 52.175\ ; & 6 \mapsto  50.81\ ; & 12 \mapsto  46.79\ ; & 18 \mapsto  43.1\ ; \\
      11 \mapsto  45.46\ ; & 17 \mapsto 43.37\ ; & 24 \mapsto 44.97\ ; & 28 \mapsto 46.15  \enspace .
    \end{array}
  \end{equation}
Thus with $k=2$, the bipyramid $(C,D)$ in the triangulation $\subdivision(V^{(3)},\text{\ttd}^{(3)})$ of ${\ast}{0}{0}{\ast}{\ast}$ gets reflected onto a bipyramid with vertex set
$\{ 2,17,18, 24, 28\}$. The epistatic weight of this bipyramid restriction to the $3$-cube, i.e., after deleting the second and third coordinate, is the parallel epistatic weight of $e_{\text{\ttd}^{(3)}}(C,D)$ for the 3-face ${\ast}{1}{0}{\ast}{\ast}$. It is given by
\begin{equation}\label{eq:det}
  \left|\det \begin{pmatrix}
    1 & 0 & 0 &0 & 52.175 \\
    1 & 0 & 1 &1 & 43.37 \\
    1 & 1 & 0 &1 & 43.1 \\
    1 & 1 & 1 &0 & 44.97 \\
    1 & 1 & 1 &1 & 46.15
  \end{pmatrix} \right| \cdot \frac{\sqrt{3}}{2} \ \approx \ 11.289 \enspace .
\end{equation}

  The epistatic filtration of \ttd$^{(3)}$ and the parallel epistatic weights in ${\ast}{1}{0}{\ast}{\ast}$ are visualized in Figure \ref{fig:parallel_transport_ttd154}, where the bipyramid $(C,D)$ is processed in level $4$.  
Note that the image on the left of that same figure is a horizontally squeezed version of Figure~\ref{fig:bar-diagram}.


\begin{figure}[tp]
  \newcommand\maxepi{15}
  \newcommand\maxclusters{5}
  \newcommand\clusterlength{0.6}
  \newcommand\clusterspace{0.1}
  \newcommand\lw{6pt}
  \centering
  \begin{minipage}[t]{.45\linewidth}
    \centering
	\includegraphics[width=\textwidth]{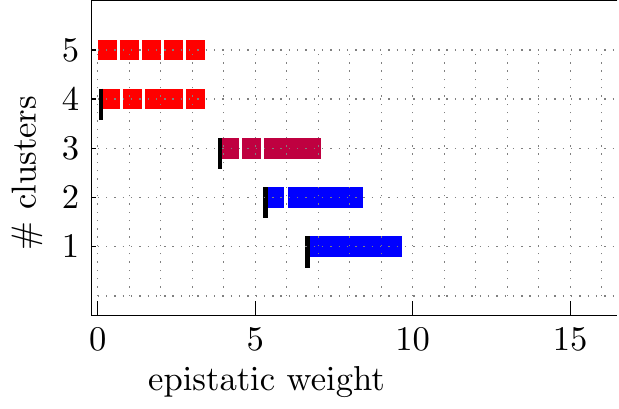}	
  \end{minipage}%
  \hfill%
  \begin{minipage}[t]{.45\linewidth}
    \centering
    \includegraphics[width=\textwidth]{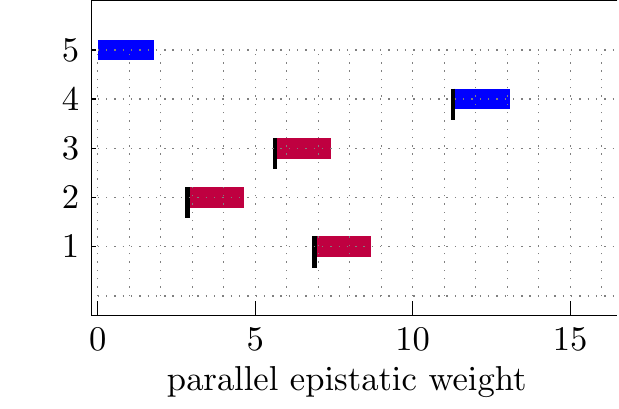}
  \end{minipage}

  \caption{Left: Epistatic filtration of the triangulation $\subdivision(V^{(3)},\text{\ttd}^{(3)})$ of the $3$-cube ${\ast}{0}{0}{\ast}{\ast}$.
    Right: Parallel epistatic weights in ${\ast}{1}{0}{\ast}{\ast}$. 
    The colors of the bars represent various levels of statistical significance 
    ($p<0.05$: \textcolor{blue}{blue}, $0.05\leq p<0.1$: \textcolor{purple}{purple}, $p\geq 0.1$: \textcolor{red}{red}); cf.\ Section~\ref{sec:statsignificance1}. 
 }
  \label{fig:parallel_transport_ttd154}
\end{figure}
Figure~\ref{fig:paralleltransport-facet2} shows (the epistatic filtration of) the triangulation of ${\ast}{0}{\ast}{\ast}{\ast}$ induced by {\ttd} (upper left), the parallel epistatic weights in ${\ast}{1}{\ast}{\ast}{\ast}$ (upper right), the triangulation of ${\ast}{1}{\ast}{\ast}{\ast}$ (lower right) and the parallel epistatic weights in ${\ast}{0}{\ast}{\ast}{\ast}$.

Computing circuit interactions and interaction coordinates, as describe in 
Section \ref{sub:cube}, on ${\ast}{1}{0}{\ast}{\ast}$ yields the 
significant results reported in Table \ref{tab:scirc2}. 
As above,
there is a unique circuit interaction with support given by the points $\{2,17,18,24,28\}$.
Its projection to the 3-cube is given by 
\[n_{h} \ = \ h(011) + h(101) + h(110) - h(000) - 2h(111) \enspace .
\]
Evaluating $n_{h}$ at \ttd$^{(3)}$ 
agrees with the determinant of Equation \eqref{eq:det}, up to sign.
\end{example}

\begin{table}[bh]
  \caption{Parallel circuit interactions and interaction 
coordinates for $\text{\ttd}^{(3)}$ on the 3-face of
${\ast}{1}{0}{\ast}{\ast}$ which reached significance for $p<0.05$.}
  \label{tab:scirc2}
  \begin{tabular*}{.87\linewidth}{@{\extracolsep{\fill}}lrccl@{}}
    \toprule
    Interaction   &$\vert Z \vert$& $\sigma_{Z}^2$ & $p\text{-value}$ folded & Vertices\\
    \midrule
$b$	&4.87 	&	2.23		&0.029& 12 18 24 28 \\
$e$	&4.90	&	2.50		&0.050& 2 11 12 24\\
$f$	&10.49	&	2.68		&0.000& 6 17 18 28\\
$i	$	&9.77	&	2.66		&0.000& 2 11 17 28 \\
$j	$	&5.62	&	2.52		&0.026& 6 12 17 24\\
$k$	&8.17	&	2.60		&0.002& 2 12 17 28 \\
$l$	&7.22	&	2.58	 	&0.005& 6 11 18 24 \\
$n$	&13.04	&	3.41		&0.000& 2 17 18 24 28\\
$o$	&8.89	&	3.50		&0.011& 6 11 12 17 28\\
$q$	&12.09	&	3.66		&0.001& 6 11 12 18 28\\
$u_{011}$	&15.39	&	3.66		&0.000& 2 6 11 12 17 18 24 28\\
    \bottomrule
  \end{tabular*}
\end{table}

\subsection{The case of parallel squares inside $[0,1]^5$}

Fixing two bacterial species, $\alpha$ and $\beta$, defines a $2$-dimensional face of $[0,1]^5$, i.e., a square.
This is the set of four points in $\{0,1\}^5$ where the $\alpha$- and $\beta$-coordinates vary, and all others are set to zero.
Now a third bacterial species, $\gamma$, defines a \emph{parallel square} in $[0,1]^5$, where the $\alpha$- and $\beta$-coordinates vary, and the $\gamma$-coordinate is set to one.
Altogether, $\alpha$, $\beta$ and $\gamma$ define a $3$-dimensional face of $[0,1]^5$.
We want to investigate the impact of the presence of $\gamma$ on the epistatic interaction between $\alpha$ and $\beta$.
This amounts to comparing the epistatic weights of the two subdivisions induced on the parallel pair of squares.

Empirically, we know that the \emph{Lactobacillus} bacteria compete with one another and \emph{Acetobacters} also compete with one another, while \emph{Acetobacters} form mutualist relationships with the \emph{Lactobacilli}. We therefore focus on the combinations $\alpha\in\{1,2\}$, $\beta\in\{3,4,5\}$ and $\gamma\in\{1,2,3,4,5\}-\{\alpha,\beta\}$.
This means that, for each height function, we obtain $2 \cdot 3 \cdot (5-2) = 18$ faces which are spanned by $\alpha$, $\beta$ and $\gamma$.
Each face is a row in each of the three Tables \ref{tab:squares-ttd}, \ref{tab:squares-fec} and \ref{tab:squares-dev} in the appendix; these tables show the result of our analysis for the three height functions \ttd,  \dev~ and \fec, respectively.
The results can also be compared with \cite[Figure S.13, p20, SM]{Gould232959}.

Each table shows the normalized volumes of the $3$-dimensional simplices obtained from the four vertices of each square lifted by the respective height function; here $\omega_0$ is the normalized volume of the simplex arising from the square with $\gamma$-coordinate zero, and $\omega_1$ corresponds to $\gamma$-coordinate equal to one.
The epistatic weights of the single dual edges of the two induced subdivisions on the two parallel squares are the absolute values $|\omega_0|$ and $|\omega_1|$.
Yet, in order to track the effect of adding $\gamma$ to $\alpha$ and $\beta$, here we take the orientation into account.
The sign entry is positive if qualitatively the epistasis is the same, and it is negative if the effect gets reversed.
The final column lists the relative increase or decrease (multiplicatively); i.e., a value of $1$ would mean that the effect stays the same while larger values mean that the effect gets stronger.

Here are a few observations that we find particularly noteworthy.
\begin{enumerate}
\item There are some cases where adding $\gamma$ seems to almost annihilate the epistasis, meaning that bystander species can disrupt an interacting pair.
\item A distinguished combination appears to be $(\alpha,\beta)=(2,4)$:
  for which the time to death and the development time  results match, suggesting the interaction affects these two traits in the same manner.
\item Another interesting case is $(\alpha,\beta)=(1,5)$:
   adding $\gamma$ weakens the epistasis for both time to death and development time. 
\end{enumerate}
So far, we have not determined biological explanations for these facts, however, the analysis defines the direction of our ongoing experimental investigations.
Our results underscore the importance of context in determining microbiome interactions, which is to say that interacting groups of species care a great deal about their neighbors.

\section{Discussion and outlook}

In this paper, we develop a new approach to studying properties of fitness landscapes via regular 
subdivisions of convex polytopes, building on and extending previous work of Beerenwinkel et al.\ (2007). Our approach offers a concise combinatorial way of processing and clustering epistatic information in higher dimensions and is based on statistical principles. 
In Theorem \ref{thm:undo} we present first provable robustness considerations.
The main new tools we propose are cluster partitions and cluster filtrations in weighted graphs associated to fitness landscapes. 
To show that our methods are capable of quantifying epistasis with a higher resolution than previously done, we investigated an existing \emph{Drosophila} microbiome data set. Among other results, we provide new forms of epistasis together with their biological interpretations.

To biologically validate our first promising findings, more biological properties of coexisting bacterial species, completing for instance Table \ref{tab:genotypes}, have to be determined. Moreover, further research has to be undertaken to connect the approaches discussed in this paper to a variety of other methodologies used within this theory, summarized for instance in \cite{Krug}.
Finally, although our methods are fully scalable, it is a matter of time until potential computational bottlenecks can fully be addressed.

\section*{Acknowledgement} 
The data was collected by Alison Gould and Vivian Zhang in Will Ludington's Lab at UC Berkeley and is now published in \cite{Gould232959}.
We are indebted to Christian Haase and Günter Rote for a fruitful discussion concerning epistatic weights, leading to the definition \eqref{eq:epistatic-weight}.




\appendix



\begin{table}[hp]
 \caption{Parallel squares in $[0,1]^5$ for \ttd. }
 \label{tab:squares-ttd}
 
 \begin{tabular*}{\linewidth}{@{\extracolsep{\fill}}lllll@{}}
   \toprule
   \multicolumn{1}{c}{$\alpha=1,~\beta=3$} & \multicolumn{1}{r}{$\omega_{0}$} & \multicolumn{1}{r}{$\omega_{1}$}& 
   \multicolumn{1}{c}{sign}&
   \multicolumn{1}{c}{$|$quot$|$}\\  
   \midrule
   \multicolumn{1}{c}{$\gamma$=2} & \multicolumn{1}{c}{$3.560$} & \multicolumn{1}{r}{$0.485$}& \multicolumn{1}{c}{+} & \multicolumn{1}{c}{0.136}\\
   
   \multicolumn{1}{c}{$\gamma$=4} & \multicolumn{1}{c}{$3.560$} & \multicolumn{1}{r}{$-1.590$}& \multicolumn{1}{c}{$-$} & \multicolumn{1}{c}{0.447}\\

   \multicolumn{1}{c}{$\gamma$=5} & \multicolumn{1}{c}{$3.560$} & \multicolumn{1}{r}{$-4.500$}& \multicolumn{1}{c}{$-$} & \multicolumn{1}{c}{1.264}\\[1.5em]

   \midrule
   \multicolumn{1}{c}{$\alpha=1,~\beta=4$} & \multicolumn{1}{r}{$\omega_{0}$} & \multicolumn{1}{r}{$\omega_{1}$}& 
   \multicolumn{1}{c}{sign}&
   \multicolumn{1}{c}{$|$quot$|$}\\  
   \midrule
   \multicolumn{1}{c}{$\gamma$=2} & \multicolumn{1}{c}{$6.090$} & \multicolumn{1}{r}{$-0.725$}& \multicolumn{1}{c}{$-$} & \multicolumn{1}{c}{0.119}\\

   \multicolumn{1}{c}{$\gamma$=3} & \multicolumn{1}{c}{$6.090$} & \multicolumn{1}{r}{$0.940$}& \multicolumn{1}{c}{+} & \multicolumn{1}{c}{0.154}\\

   \multicolumn{1}{c}{$\gamma$=5} & \multicolumn{1}{c}{$6.090$} & \multicolumn{1}{r}{$-3.140$}& \multicolumn{1}{c}{$-$} & \multicolumn{1}{c}{0.516}\\[1.5em]

   \midrule
   \multicolumn{1}{c}{$\alpha=1,~\beta=5$} & \multicolumn{1}{r}{$\omega_{0}$} & \multicolumn{1}{r}{$\omega_{1}$}& 
   \multicolumn{1}{c}{sign}&
   \multicolumn{1}{c}{$|$quot$|$}\\  
   \midrule
   \multicolumn{1}{c}{$\gamma$=2} & \multicolumn{1}{c}{$6.920$} & \multicolumn{1}{r}{$-2.325$}& \multicolumn{1}{c}{$-$} & \multicolumn{1}{c}{0.336}\\

   \multicolumn{1}{c}{$\gamma$=3} & \multicolumn{1}{c}{$6.920$} & \multicolumn{1}{r}{$-1.140$}& \multicolumn{1}{c}{$-$} & \multicolumn{1}{c}{0.165}\\

   \multicolumn{1}{c}{$\gamma$=4} & \multicolumn{1}{c}{$6.920$} & \multicolumn{1}{r}{$-2.310$}& \multicolumn{1}{c}{$-$} & \multicolumn{1}{c}{0.334}\\[1.5em]

   \midrule
   \multicolumn{1}{c}{$\alpha=2,~\beta=3$} & \multicolumn{1}{r}{$\omega_{0}$} & \multicolumn{1}{r}{$\omega_{1}$}& 
   \multicolumn{1}{c}{sign}&
   \multicolumn{1}{c}{$|$quot$|$}\\  
   \midrule
   \multicolumn{1}{c}{$\gamma$=1} & \multicolumn{1}{c}{$0.715$} & \multicolumn{1}{r}{$3.790$}& \multicolumn{1}{c}{$+$} & \multicolumn{1}{c}{5.301}\\

   \multicolumn{1}{c}{$\gamma$=4} & \multicolumn{1}{c}{$0.715$} & \multicolumn{1}{r}{$1.620$}& \multicolumn{1}{c}{$+$} & \multicolumn{1}{c}{2.266}\\
   
   \multicolumn{1}{c}{$\gamma$=5} & \multicolumn{1}{c}{$0.715$} & \multicolumn{1}{r}{$8.090$}& \multicolumn{1}{c}{$+$} & \multicolumn{1}{c}{11.315}\\[1.5em]

   \midrule
   \multicolumn{1}{c}{$\alpha=2,~\beta=4$} & \multicolumn{1}{r}{$\omega_{0}$} & \multicolumn{1}{r}{$\omega_{1}$}& 
   \multicolumn{1}{c}{sign}&
   \multicolumn{1}{c}{$|$quot$|$}\\  
   \midrule
   \multicolumn{1}{c}{$\gamma$=1} & \multicolumn{1}{c}{$1.765$} & \multicolumn{1}{r}{$8.580$}& \multicolumn{1}{c}{$+$} & \multicolumn{1}{c}{4.861}\\

   \multicolumn{1}{c}{$\gamma$=3} & \multicolumn{1}{c}{$1.765$} & \multicolumn{1}{r}{$2.670$}& \multicolumn{1}{c}{$+$} & \multicolumn{1}{c}{1.513}\\

   \multicolumn{1}{c}{$\gamma$=5} & \multicolumn{1}{c}{$1.765$} & \multicolumn{1}{r}{$2.190$}& \multicolumn{1}{c}{$+$} & \multicolumn{1}{c}{1.241}\\[1.5em]

   \midrule
   \multicolumn{1}{c}{$\alpha=2,~\beta=5$} & \multicolumn{1}{r}{$\omega_{0}$} & \multicolumn{1}{r}{$\omega_{1}$}& 
   \multicolumn{1}{c}{sign}&
   \multicolumn{1}{c}{$|$quot$|$}\\  
   \midrule
   \multicolumn{1}{c}{$\gamma$=1} & \multicolumn{1}{c}{$4.705$} & \multicolumn{1}{r}{$-4.540$}& \multicolumn{1}{c}{$-$} & \multicolumn{1}{c}{0.965}\\

   \multicolumn{1}{c}{$\gamma$=3} & \multicolumn{1}{c}{$4.705$} & \multicolumn{1}{r}{$-2.670$}& \multicolumn{1}{c}{$-$} & \multicolumn{1}{c}{0.567}\\

   \multicolumn{1}{c}{$\gamma$=4} & \multicolumn{1}{c}{$4.705$} & \multicolumn{1}{r}{$4.280$}& \multicolumn{1}{c}{$+$} & \multicolumn{1}{c}{0.910}\\
   
   \bottomrule    
    
  \end{tabular*}
\end{table}

\begin{table}[hp]
 \caption{Parallel squares in $[0,1]^5$ for \dev.}
  \label{tab:squares-dev}

  \begin{tabular*}{\linewidth}{@{\extracolsep{\fill}}lllll@{}}
    \toprule
    \multicolumn{1}{c}{$\alpha=1,~\beta=3$} & \multicolumn{1}{r}{$\omega_{0}$} & \multicolumn{1}{r}{$\omega_{1}$}& 
    \multicolumn{1}{c}{sign}&
    \multicolumn{1}{c}{$|$quot$|$}\\  
    \midrule
    
\multicolumn{1}{c}{$\gamma$=2} & \multicolumn{1}{c}{0.712} & \multicolumn{1}{r}{0.083}& \multicolumn{1}{c}{+} & \multicolumn{1}{c}{0.117}\\

\multicolumn{1}{c}{$\gamma$=4} & \multicolumn{1}{c}{0.712} & \multicolumn{1}{r}{$-0.208$}& \multicolumn{1}{c}{$-$} & \multicolumn{1}{c}{0.293}\\

\multicolumn{1}{c}{$\gamma$=5} & \multicolumn{1}{c}{0.712} & \multicolumn{1}{r}{$-0.208$}& \multicolumn{1}{c}{$-$} & \multicolumn{1}{c}{0.293}\\[1.5em]

    \midrule
\multicolumn{1}{c}{$\alpha=1,~\beta=4$} & \multicolumn{1}{r}{$\omega_{0}$} & \multicolumn{1}{r}{$\omega_{1}$}& 
  \multicolumn{1}{c}{sign}&
  \multicolumn{1}{c}{$|$quot$|$}\\  
    \midrule
\multicolumn{1}{c}{$\gamma$=2} & \multicolumn{1}{c}{0.795} & \multicolumn{1}{r}{0.167}& \multicolumn{1}{c}{+} & \multicolumn{1}{c}{0.210}\\

\multicolumn{1}{c}{$\gamma$=3} & \multicolumn{1}{c}{0.795} & \multicolumn{1}{r}{$-0.125$}& \multicolumn{1}{c}{$-$} & \multicolumn{1}{c}{0.157}\\

\multicolumn{1}{c}{$\gamma$=5} & \multicolumn{1}{c}{0.795} & \multicolumn{1}{r}{$-0.083$}& \multicolumn{1}{c}{$-$} & \multicolumn{1}{c}{0.105}\\[1.5em]

    \midrule
\multicolumn{1}{c}{$\alpha=1,~\beta=5$} & \multicolumn{1}{r}{$\omega_{0}$} & \multicolumn{1}{r}{$\omega_{1}$}& 
  \multicolumn{1}{c}{sign}&
  \multicolumn{1}{c}{$|$quot$|$}\\  
    \midrule   
\multicolumn{1}{c}{$\gamma$=2} & \multicolumn{1}{c}{0.754} & \multicolumn{1}{r}{$-0.208$}& \multicolumn{1}{c}{$-$} & \multicolumn{1}{c}{0.276}\\

\multicolumn{1}{c}{$\gamma$=3} & \multicolumn{1}{c}{0.754} & \multicolumn{1}{r}{$-0.167$}& \multicolumn{1}{c}{$-$} & \multicolumn{1}{c}{0.221}\\

\multicolumn{1}{c}{$\gamma$=4} & \multicolumn{1}{c}{0.754} & \multicolumn{1}{r}{$-0.125$}& \multicolumn{1}{c}{$-$} & \multicolumn{1}{c}{0.166}\\[1.5em]

  \midrule

       \multicolumn{1}{c}{$\alpha=2,~\beta=3$} & \multicolumn{1}{r}{$\omega_{0}$} & \multicolumn{1}{r}{$\omega_{1}$}& 
  \multicolumn{1}{c}{sign}&
  \multicolumn{1}{c}{$|$quot$|$}\\  
    \midrule
\multicolumn{1}{c}{$\gamma$=1} & \multicolumn{1}{c}{0.337} & \multicolumn{1}{r}{$-0.292$}& \multicolumn{1}{c}{$-$} & \multicolumn{1}{c}{0.865}\\

\multicolumn{1}{c}{$\gamma$=4} & \multicolumn{1}{c}{0.337} & \multicolumn{1}{r}{0.042}& \multicolumn{1}{c}{+} & \multicolumn{1}{c}{0.124}\\

\multicolumn{1}{c}{$\gamma$=5} & \multicolumn{1}{c}{0.337} & \multicolumn{1}{r}{0.125}& \multicolumn{1}{c}{+} & \multicolumn{1}{c}{0.371}\\[1.5em]

    \midrule
\multicolumn{1}{c}{$\alpha=2,~\beta=4$} & \multicolumn{1}{r}{$\omega_{0}$} & \multicolumn{1}{r}{$\omega_{1}$}& 
  \multicolumn{1}{c}{sign}&
  \multicolumn{1}{c}{$|$quot$|$}\\  
    \midrule
\multicolumn{1}{c}{$\gamma$=1} & \multicolumn{1}{c}{0.163} & \multicolumn{1}{r}{0.792}& \multicolumn{1}{c}{+} & \multicolumn{1}{c}{4.860}\\

\multicolumn{1}{c}{$\gamma$=3} & \multicolumn{1}{c}{0.163} & \multicolumn{1}{r}{0.458}& \multicolumn{1}{c}{+} & \multicolumn{1}{c}{2.814}\\

\multicolumn{1}{c}{$\gamma$=5} & \multicolumn{1}{c}{0.163} & \multicolumn{1}{r}{0.417}& \multicolumn{1}{c}{+} & \multicolumn{1}{c}{2.558}\\[1.5em]

    \midrule
\multicolumn{1}{c}{$\alpha=2,~\beta=5$} & \multicolumn{1}{r}{$\omega_{0}$} & \multicolumn{1}{r}{$\omega_{1}$}& 
  \multicolumn{1}{c}{sign}&
  \multicolumn{1}{c}{$|$quot$|$}\\  
    \midrule    
\multicolumn{1}{c}{$\gamma$=1} & \multicolumn{1}{c}{0.080} & \multicolumn{1}{r}{1.042}& \multicolumn{1}{c}{+} & \multicolumn{1}{c}{13.095}\\

\multicolumn{1}{c}{$\gamma$=3} & \multicolumn{1}{c}{0.080} & \multicolumn{1}{r}{0.292}& \multicolumn{1}{c}{+} & \multicolumn{1}{c}{3.667}\\

\multicolumn{1}{c}{$\gamma$=4} & \multicolumn{1}{c}{0.080} & \multicolumn{1}{r}{0.333}& \multicolumn{1}{c}{+} & \multicolumn{1}{c}{4.190}\\

    \bottomrule

  \end{tabular*}
\end{table}

\begin{table}[hp]
 \caption{Parallel squares in $[0,1]^5$ for \fec.}
  \label{tab:squares-fec}

  \begin{tabular*}{\linewidth}{@{\extracolsep{\fill}}lllll@{}}
    \toprule
    \multicolumn{1}{c}{$\alpha=1,~\beta=3$} & \multicolumn{1}{r}{$\omega_{0}$} & \multicolumn{1}{r}{$\omega_{1}$}& 
    \multicolumn{1}{c}{sign}&
    \multicolumn{1}{c}{$|$quot$|$}\\  
    \midrule
\multicolumn{1}{c}{$\gamma$=2} & \multicolumn{1}{c}{1.152} & \multicolumn{1}{r}{$-0.577$}& \multicolumn{1}{c}{$-$} & \multicolumn{1}{c}{0.501}\\

\multicolumn{1}{c}{$\gamma$=4} & \multicolumn{1}{c}{1.152} & \multicolumn{1}{r}{1.008}& \multicolumn{1}{c}{+} & \multicolumn{1}{c}{0.876}\\

\multicolumn{1}{c}{$\gamma$=5} & \multicolumn{1}{c}{1.152} & \multicolumn{1}{r}{$-0.336$}& \multicolumn{1}{c}{$-$} & \multicolumn{1}{c}{0.292}\\[1.5em]
  
    \midrule
\multicolumn{1}{c}{$\alpha=1,~\beta=4$} & \multicolumn{1}{r}{$\omega_{0}$} & \multicolumn{1}{r}{$\omega_{1}$}& 
  \multicolumn{1}{c}{sign}&
  \multicolumn{1}{c}{$|$quot$|$}\\  
  \midrule
\multicolumn{1}{c}{$\gamma$=2} & \multicolumn{1}{c}{0.172} & \multicolumn{1}{r}{0.823}& \multicolumn{1}{c}{+} & \multicolumn{1}{c}{4.784}\\

\multicolumn{1}{c}{$\gamma$=3} & \multicolumn{1}{c}{0.172} & \multicolumn{1}{r}{0.029}& \multicolumn{1}{c}{+} & \multicolumn{1}{c}{0.167}\\

\multicolumn{1}{c}{$\gamma$=5} & \multicolumn{1}{c}{0.172} & \multicolumn{1}{r}{$-0.099$}& \multicolumn{1}{c}{$-$} & \multicolumn{1}{c}{0.578}\\[1.5em]

    \midrule
\multicolumn{1}{c}{$\alpha=1,~\beta=5$} & \multicolumn{1}{r}{$\omega_{0}$} & \multicolumn{1}{r}{$\omega_{1}$}& 
  \multicolumn{1}{c}{sign}&
  \multicolumn{1}{c}{$|$quot$|$}\\  
    \midrule    
\multicolumn{1}{c}{$\gamma$=2} & \multicolumn{1}{c}{0.508} & \multicolumn{1}{r}{$-1.138$}& \multicolumn{1}{c}{$-$} & \multicolumn{1}{c}{2.241}\\

\multicolumn{1}{c}{$\gamma$=3} & \multicolumn{1}{c}{0.508} & \multicolumn{1}{r}{$-0.980$}& \multicolumn{1}{c}{$-$} & \multicolumn{1}{c}{1.929}\\

\multicolumn{1}{c}{$\gamma$=4} & \multicolumn{1}{c}{0.508} & \multicolumn{1}{r}{0.236}& \multicolumn{1}{c}{+} & \multicolumn{1}{c}{0.465}\\[1.5em]
    
    \midrule
       \multicolumn{1}{c}{$\alpha=2,~\beta=3$} & \multicolumn{1}{r}{$\omega_{0}$} & \multicolumn{1}{r}{$\omega_{1}$}& 
  \multicolumn{1}{c}{sign}&
  \multicolumn{1}{c}{$|$quot$|$}\\  
  
   \midrule
\multicolumn{1}{c}{$\gamma$=1} & \multicolumn{1}{c}{0.547} & \multicolumn{1}{r}{$-1.183$}& \multicolumn{1}{c}{$-$} & \multicolumn{1}{c}{2.163}\\

\multicolumn{1}{c}{$\gamma$=4} & \multicolumn{1}{c}{0.547} & \multicolumn{1}{r}{1.701}& \multicolumn{1}{c}{+} & \multicolumn{1}{c}{3.111}\\

\multicolumn{1}{c}{$\gamma$=5} & \multicolumn{1}{c}{0.547} & \multicolumn{1}{r}{0.146}& \multicolumn{1}{c}{+} & \multicolumn{1}{c}{0.267}\\[1.5em]

    \midrule
\multicolumn{1}{c}{$\alpha=2,~\beta=4$} & \multicolumn{1}{r}{$\omega_{0}$} & \multicolumn{1}{r}{$\omega_{1}$}& 
  \multicolumn{1}{c}{sign}&
  \multicolumn{1}{c}{$|$quot$|$}\\  
    \midrule
\multicolumn{1}{c}{$\gamma$=1} & \multicolumn{1}{c}{0.577} & \multicolumn{1}{r}{$-0.074$}& \multicolumn{1}{c}{$-$} & \multicolumn{1}{c}{0.128}\\

\multicolumn{1}{c}{$\gamma$=3} & \multicolumn{1}{c}{0.577} & \multicolumn{1}{r}{$-0.577$}& \multicolumn{1}{c}{$-$} & \multicolumn{1}{c}{1.000}\\

\multicolumn{1}{c}{$\gamma$=5} & \multicolumn{1}{c}{0.577} & \multicolumn{1}{r}{0.065}& \multicolumn{1}{c}{+} & \multicolumn{1}{c}{0.113}\\[1.5em]

    \midrule
\multicolumn{1}{c}{$\alpha=2,~\beta=5$} & \multicolumn{1}{r}{$\omega_{0}$} & \multicolumn{1}{r}{$\omega_{1}$}& 
  \multicolumn{1}{c}{sign}&
  \multicolumn{1}{c}{$|$quot$|$}\\  
	\midrule 
\multicolumn{1}{c}{$\gamma$=1} & \multicolumn{1}{c}{0.739} & \multicolumn{1}{r}{$-0.908$}& \multicolumn{1}{c}{$-$} & \multicolumn{1}{c}{1.229}\\

\multicolumn{1}{c}{$\gamma$=3} & \multicolumn{1}{c}{0.739} & \multicolumn{1}{r}{0.338}& \multicolumn{1}{c}{+} & \multicolumn{1}{c}{0.458}\\

\multicolumn{1}{c}{$\gamma$=4} & \multicolumn{1}{c}{0.739} & \multicolumn{1}{r}{1.250}& \multicolumn{1}{c}{+} & \multicolumn{1}{c}{1.693}\\

    \bottomrule

\end{tabular*}
\end{table}

\begin{table}
  \caption{Numbering of the genotypes.}
  \label{tab:genotypes}
  \begin{tabular*}{.95\linewidth}{@{\extracolsep{\fill}}rll@{}}
    \toprule
    \multicolumn{1}{c}{number} & \multicolumn{1}{c}{genotype} & \multicolumn{1}{c}{comment} \\
    \midrule
    0 & 0 0 0 0 0  & germ-free\\
    1 & 1 0 0 0 0  & \emph{Lactobacillus plantarum}\\
    2 & 0 1 0 0 0  & \emph{Lactobacillus brevis}\\
    3 & 0 0 1 0 0  & \emph{Acetobacter pasteurianus} \\
    4 & 0 0 0 1 0  & \emph{Acetobacter tropicalis} \\
    5 & 0 0 0 0 1  & \emph{Acetobacter orientalis}\\
    6 & 1 1 0 0 0  & bacterial growth decreases: competition\\
    7 & 1 0 1 0 0  & bacterial growth increases: synergy \\
    8 & 1 0 0 1 0  & synergy\\
    9 & 1 0 0 0 1  & synergy\\
    10 & 0 1 1 0 0  & synergy\\
    11 & 0 1 0 1 0  & synergy\\
    12 & 0 1 0 0 1  & synergy\\
    13 & 0 0 1 1 0  & competition \\
    14 & 0 0 1 0 1  & competition\\
    15 & 0 0 0 1 1  & competition\\
    16 & 1 1 1 0 0  & \\
    17 & 1 1 0 1 0  & \\
    18 & 1 1 0 0 1  & \\
    19 & 1 0 1 1 0  & \\
    20 & 1 0 1 0 1  & \\
    21 & 1 0 0 1 1  & \\
    22 & 0 1 1 1 0  & \\
    23 & 0 1 1 0 1  & \\
    24 & 0 1 0 1 1  & \\
    25 & 0 0 1 1 1  & \\
    26 & 1 1 1 1 0  & \\
    27 & 1 1 1 0 1  & \\
    28 & 1 1 0 1 1  & \\
    29 & 1 0 1 1 1  & \\
    30 & 0 1 1 1 1  & \\
    31 & 1 1 1 1 1 \\
    \bottomrule
  \end{tabular*}
\end{table}

\input{paralleltransport_facet2}

\begin{figure}[th]
\includegraphics[width=\textwidth]{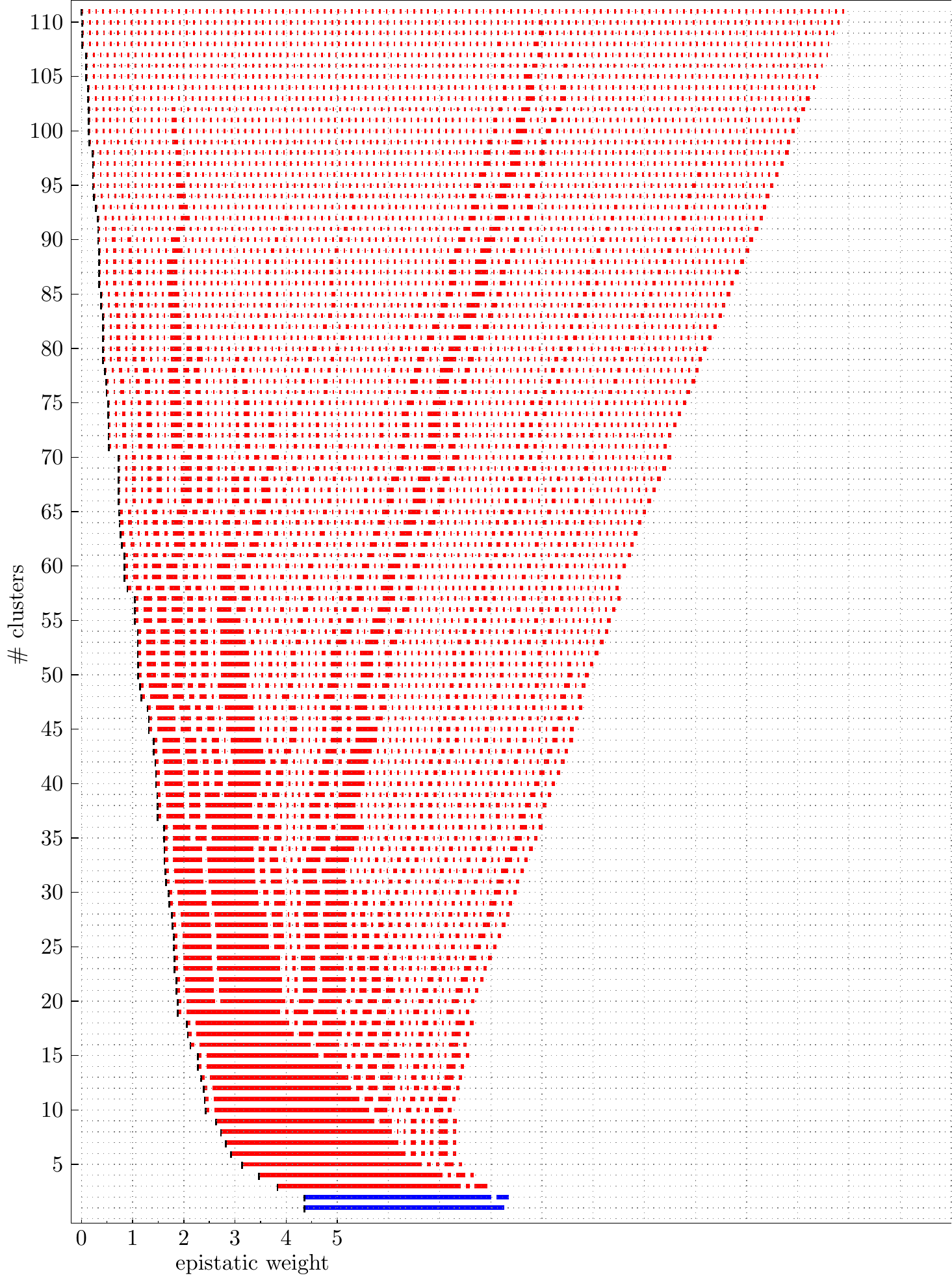}
\caption{Epistatic filtration of the triangulation $\subdivision(\{0,1\}^5,\text{ttd})$ with 111 maximal cells. The colors of the bars represent various levels of statistical significance 
 ($p<0.05$: \textcolor{blue}{blue}, $0.05\leq p<0.1$: \textcolor{purple}{purple}, $p\geq 0.1$: \textcolor{red}{red}); cf.\ Section~\ref{sec:statsignificance1}. 
  The $f$-vector for $\subdivision(\{0,1\}^5,\text{ttd})$ reads (32, 204, 540, 702, 446, 111)  and (111, 220, 137, 28, 1) for its tight span.}
\label{fig:whole_ttd_filtration}
\end{figure}






\clearpage

\end{document}